\newtheorem{prop}{Proposition}
\newtheorem{remark}{Remark}
\DeclarePairedDelimiter\abs{\lvert}{\rvert} %para abs
\begin{document}

\title{Information-Theoretic Security of MIMO Networks \\ under $\kappa$-$\mu$ Shadowed Fading Channels}
% information-theoretic  secrecy of k-u shadowed fading channels in mimo networks

\author{Jos\'e~David~Vega~S\'anchez, D.~P.~Moya~Osorio, F. Javier L\'opez-Mart\'inez, Martha Cecilia Paredes Paredes, and Luis~Urquiza-Aguiar
\thanks{Jos\'e~David~Vega~S\'anchez, Martha Cecilia Paredes Paredes, and Luis~Urquiza-Aguiar are with the  
Departamento de Electr\'onica, Telecomunicaciones y Redes de Informaci\'on, Escuela Polit\'ecnica Nacional (EPN),
Quito,  170525, Ecuador. (e-mail: jose.vega01@epn.edu.ec; cecilia.paredes@epn.edu.ec; luis.urquiza@epn.edu.ec). }
\thanks{D.~P.~Moya~Osorio is with the Centre for Wireless Communications (CWC), University of Oulu, Finland. (e-mail: diana.moyaosorio@oulu.fi)}
\thanks{ F. Javier L\'opez-Mart\'inez is with Departamento de Ingenier\'ia de Comunicaciones, Universidad de M\'alaga - Campus de Excelencia Internacional Andaluc\'ia Tech., M\'alaga 29071, Spain. (e-mail: fjlopezm@ic.uma.es). }
\thanks{This work was supported in part by the Escuela Polit\'ecnica Nacional, for the development of the project PIGR-19-06. Jos\'e~David~Vega~S\'anchez is the recipient of a teaching assistant fellowship from Escuela Polit\'ecnica Nacional for doctoral studies in Electrical Engineering. It was also supported in part by the Academy of Finland 6Genesis Flagship under Grant 318927 and FAITH project under Grant 334280. It was also supported by the Spanish Government and the European Fund for Regional Development FEDER (project TEC2017-87913-R) and by Junta de Andalucia (project P18-RT-3175, TETRA5G).}
}
\maketitle
%%%%%%%%%%% ABSTRACT
\begin{abstract}
This paper investigates the impact of realistic propagation conditions on the achievable secrecy performance of multiple-input multiple-output systems in the presence of an eavesdropper. Specifically, we concentrate on the $\kappa$-$\mu$ shadowed fading model because its physical underpinnings capture a wide range of propagation conditions, while, at the same time, it allows for a much better tractability than other state-of-the-art fading models. By considering transmit antenna selection and maximal ratio combining reception at the legitimate and eavesdropper's receiver sides, we study two relevant scenarios $(i)$  the transmitter does not know the eavesdropper's channel state information (CSI), and $(ii)$ the transmitter has knowledge of the CSI of the eavesdropper link. For this purpose, we first obtain novel and tractable expressions for the statistics of the maximum of independent and identically distributed (i.i.d.) variates related to the legitimate path. Based on these results, we derive novel closed-form expressions for the secrecy outage probability (SOP) and the average secrecy capacity (ASC) to assess the secrecy performance in passive and active eavesdropping scenarios, respectively. %{\color{red} Seria interesante si se resalta el hecho de ofrecer tambien una forma cerrada para las estadisticas del main channel (MIMO Tas/Mrc) ya que esta es una de las contribuciones del paper tb (no hay en ningun otro, verdad?, por ejemplo algo como) \color{blue} For that purpose, we first obtain novel and tractable expressions for the statistics of the main and eavesdropper channels.}
Moreover, we develop analytical asymptotic expressions of the SOP and ASC at the high signal-to-noise ratio regime. In all instances, secrecy performance metrics are characterized in closed-form, without requiring the evaluation of Meijer or Fox functions. Some useful insights on how the different propagation conditions and the number of antennas impact the secrecy performance are also provided.

\end{abstract}
\begin{IEEEkeywords}
$\kappa$-$\mu$ shadowed, generalized fading channels, maximal ratio combining, multiple-input multiple-output, physical layer security, transmit antenna selection. 
\end{IEEEkeywords}

%%%%%%%%%%%%%%%%%%%%%%%%%%%%%%%%%%%%%%%%%%%%%%% INTRODUCTION
\section{Introduction}
\IEEEPARstart{T}{raditionally}, security systems are based on higher layer cryptographic mechanisms, which contemplate mathematically complex algorithms that demand a high consumption of energy and computational resources. Such methods pose great challenges for their implementation and management for the fifth-generation (5G) wireless networks in practice. Therefore,
classical cryptography by itself does not constitute an integral solution to the security problems envisioned for future wireless transmissions. In this sense, physical layer security (PLS) arises as an alternative to providing secure communications at the physical layer by smartly exploiting the randomness (e.g., noise, interference, and fading) of wireless channels~\cite{Moya,survey,survey2}. 

The first notions of PLS in an information-theoretical context were initially introduced by Shannon in his pioneering work in~\cite{shannon}. Later, the so-called wiretap channel was introduced by Wyner in~\cite{Wyner}.  
Subsequently, Wyner's results were extended for the broadcast channel in~\cite{korner} and for the Gaussian channel in~\cite{cheong}, where the secrecy capacity was defined as the difference between the capacities of the main channel and the wiretap channel. Thus, secret transmissions are possible if and only if the quality of the legitimate channel is better than that of the eavesdropper channel.

Based on these results, the key concepts concerning the generalization of the wiretap channel to multiple-input multiple-output (MIMO) channels were investigated in~\cite{Khisti,Oggier}. These seminal works have inspired various research efforts to improve the secrecy performance in different MIMO topologies. For instance, the utilization of artificial noise (AN) has been proposed to enhance the secrecy performance of MIMO networks~\cite{Goel}. Moreover, the impact of cooperative communications on the secrecy capacity of MIMO wiretap systems were studied in~\cite{Zhaoo,Chu}. In~\cite{Cai}, the authors focused on the secrecy performance of cognitive MIMO relaying networks. On the other hand, in order to achieve higher secrecy capacities, different beamforming schemes were considered in~\cite{Mukherjee,Niu,Lin}. Nevertheless, beamforming-based methods require as many radio-frequency (RF) chains as antenna ports, as well as the use of advanced signal processing algorithms to accurately estimate the channel state information (CSI). This results in a high computational demand, which may be infeasible for resource-constrained devices. Alternatively, as optimal antenna selection at the transmitter side only requires a single RF chain compared to classical beamforming schemes~\cite{Sanayei}, transmit antenna selection (TAS) has been adopted to enhance secrecy performance at low-cost and complexity. Therefore, several works have focused on the advantages of TAS in the context of PLS~\cite{Suraweera,Elkashlan}. In those works, PLS metrics were investigated in the combined use of TAS and maximal ratio combining (MRC) receivers affected by Rayleigh and Nakagami-$m$ fading, respectively. Readers are encouraged to find further information on PLS techniques in TAS/MRC systems in~\cite{Xiong} (and references therein).

Recently, the secrecy performance in MIMO wiretap channels has been analyzed over generalized fading conditions (i.e., $\alpha$-$\mu$~\cite{Moualeu} and $\eta$-$\mu$~\cite{Ansari} fading models). However, the fading channels considered in the works mentioned above are sometimes inaccurate to characterize the propagation medium in emerging practical scenarios~\cite{Eggers}. To circumvent this issue, generalized and versatile channel models, such as the Fluctuating Two-Ray (FTR)~\cite{Gold} and the $\kappa$-$\mu$ shadowed~\cite{paris}, have been proposed. Such models rely on the assumption that dominant components are subject to random fluctuations, which are associated in some contexts to human-body shadowing. Based on this channel feature, the $\kappa$-$\mu$ shadowed fading model finds great applicability in a range of real-world applications such as device-to-device (D2D) communications, underwater acoustic communications (UAC), body-centric fading channels, unmanned aerial vehicle (UAV) systems, land mobile satellite (LMS), etc~\cite{chun}. Besides, the $\kappa$-$\mu$ shadowed fading model brings a significant advantage compared to other state-of-the-art generalized fading models, which is the improved mathematical tractability \cite{javiermistura}. Therefore, this model does not require the use of rather sophisticated special functions like Meijer-G or Fox-H functions, which are not included in standard mathematical packages, and their evaluation may pose numerical challenges. In the context of PLS, secrecy metrics over the two fading models mentioned above for single-input single-output (SISO) wiretap channels were investigated in~\cite{nwdp,Jerez}. To the best of our knowledge, the secrecy performance of MIMO systems over $\kappa$-$\mu$ shadowed fading channels remains unexplored. To fill this gap, by proposing novel and tractable expressions for the main statistics of the MIMO TAS/MRC $\kappa$-$\mu$ shadowed fading channel, we investigate the impact of multiple antennas and fading parameters on the secrecy performance of these networks. The main takeaways of our work are as follows: 

\begin{itemize}
\item 
We provide new equivalent forms of the $\kappa$-$\mu$ shadowed CDFs, which are very useful to derive either the maximum or minimum of independent and identically distributed (i.i.d.) $\kappa$-$\mu$ shadowed random variables (RVs). 
Based on these results, novel closed-form expressions for the probability density function (PDF) and the cumulative distribution function (CDF) of i.i.d. $\kappa$-$\mu$ shadowed random variables (RVs) associated with the legitimate links are derived.

\item  We derive exact closed-form expressions for the secrecy outage probability (SOP) of the proposed system, assuming that the transmitter is not aware of the CSI of the wiretap channel. We also provide closed-form expressions for the average secrecy capacity (ASC) by assuming that the CSI of the wiretap channel is available at the transmitter side. Both secrecy metrics are developed in a TAS/MRC configuration under $\kappa$-$\mu$ shadowed fading.

\item Simple asymptotic expressions for the SOP and the ASC in the high signal-to-noise ratio (SNR) regime are obtained. Based on these formulations, we provide some useful insights of the impact of the system parameters (i.e., numbers of antennas and fading parameters) on the PLS performance.
 
\end{itemize}

%%%%%%%%%%%%%%%%%%%%%%%%%%%%%%%%
%Organization
The remainder of this manuscript is organized as follows. Section II introduces the system and channel models, as well as the chief statistics of the maximum of i.i.d. $\kappa$-$\mu$ shadowed RVs. Section III derives closed-form expressions for the SOP and the asymptotic behaviour of the SOP over i.i.d. $\kappa$-$\mu$ shadowed fading channels. Section IV presents analytical expressions for the ASC, and the asymptotic ASC is also obtained. Section V shows illustrative numerical results and discussions. Finally, concluding remarks are provided in Section VI.
%%%%%%%%%%%%%%%%%%%%%%%%%%%%%%%%
% Notation and terminology

\emph{Notation}: Throughout this paper, $f_{Z}(z)$ and $F_{Z}(z)$ denote the PDF and the CDF of a RV $Z$, respectively.  $\mathbb{E} \left [ \cdot \right ]$ is the expectation operator,  $\Pr\left \{ \cdot  \right \}$ represents probability, $\abs{\cdot}$ is the absolute value, $\simeq$ refers to ``asymptotically equal~to'', and $\approx$ refers to ``approximately equal~to''. In addition, $\Gamma(\cdot)$ denotes gamma function~\cite[Eq.~(6.1.1)]{Abramowitz}, $\gamma(\cdot,\cdot)$ is the lower incomplete gamma
function~\cite[Eq.~(6.5.2)]{Abramowitz}, $\Gamma(\cdot,\cdot)$ is the upper incomplete gamma function~\cite[Eq.~(6.5.3)]{Abramowitz}, $\mathcal{C}$ is the Euler-Mascheroni constant~\cite[Eq.~(8.367.1)]{Gradshteyn}, $e$ is the exponential constant~\cite[Eq.~(0.245.1)]{Gradshteyn}, ${}_2F_1\left(\cdot,\cdot;\cdot;\cdot\right)$ is the hypergeometric function~\cite[Eq.~(15.1.1)]{Abramowitz}, and ${}_1F_1\left(\cdot,\cdot,\cdot\right)$ is the confluent hypergeometric function \cite[Eq.~(13.1.3)]{Abramowitz}.

\begin{figure}[t]
\centering 
\psfrag{A}[Bc][Bc][0.8]{$\mathrm{A}$}
\psfrag{B}[Bc][Bc][0.8]{$\mathrm{B}$}
\psfrag{E}[Bc][Bc][0.8]{$\mathrm{E}$}
\psfrag{F}[Bc][Bc][0.6]{$\mathrm{Feedback}$}
\psfrag{C}[Bc][Bc][0.6]{$\mathrm{Channel}$}
\psfrag{NA}[Bc][Bc][0.6]{$N_\mathrm{A}$}
\psfrag{NB}[Bc][Bc][0.6]{$N_\mathrm{B}$}
\psfrag{NC}[Bc][Bc][0.6]{$N_\mathrm{E}$}
\psfrag{1}[Bc][Bc][0.6]{$1$}
\psfrag{MRC}[Bc][Bc][0.6]{$\mathrm{MRC}$}
\psfrag{Antenna}[Bc][Bc][0.4]{$\mathrm{Antenna}$}
\psfrag{Selection}[Bc][Bc][0.4]{$\mathrm{Selector}$}
\psfrag{Main}[Bc][Bc][0.6]{Main Channel} 
\psfrag{Wiretap}[Bc][Bc][0.6]{Wiretap Channel }
\psfrag{w1}[Bc][Bc][0.6]{$h_{\mathrm{k^*,\textit{l}}}$}
\psfrag{w2}[Bc][Bc][0.6][-24]{$h_{\mathrm{1^*,\textit{l}  }}$}
\psfrag{w3}[Bc][Bc][0.6][30]{$h_{\mathrm{k^*,1}}$}
\psfrag{w4}[Bc][Bc][0.6]{$h_{\mathrm{k^*,\textit{l}}}$}
\psfrag{w5}[Bc][Bc][0.6][-38]{$g_{\mathrm{1^*,1}}$}
\psfrag{w6}[Bc][Bc][0.6][-50]{$g_{\mathrm{1^*,\textit{f}}}$}
\psfrag{w7}[Bc][Bc][0.6][-30]{$g_{\mathrm{k^*,1}}$}
\psfrag{w8}[Bc][Bc][0.6][-45]{$g_{\mathrm{k^*,\textit{f}}}$}
\includegraphics[width=0.7\linewidth]{./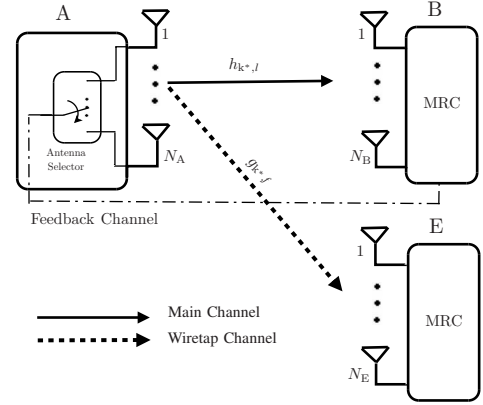} \caption{A general TAS/MRC MIMO network consisting of a legitimate pair and one eavesdropper, where the transmitter Alice $(\mathrm{A})$, the receiver Bob $(\mathrm{B})$, and the eavesdropper Eve $(\mathrm{E})$ are equipped with $N_{\mathrm{A}}$, $N_{\mathrm{B}}$, and $N_{\mathrm{E}}$ antennas, respectively.\color{red}
%Verifica los canales de la figuras, en mi entendimiento existen todos los canales desde todas las antenas de A a B e de A a E, pero la transmision propriamente es apenas de la antena k* (TAS), entonces deveria ilustrarse esa antena k*(puedes ilustrar como una antena entre 1 y NA) que no necesariamente es la antena de indice NA, y tampoco existe un 1*, asi como la l-esima antena y la r-esima (verifica esos indices porque esta confundiendose con el indice r que aparece en (1) y (2), tenta alocar un unico indice para referirse a las antenas de Alice Bob e Eve que no uses en mas otro lugar, porque también hay algunos sumatorios en NA que usan indice h, pero tu usas k para referirte a las antenas de A, y h puede confundirse con coeficiente de canal) antena en B y E no son necesariamente la NB ni la NE
}
\label{sistema1}
\vspace{-5mm}
\end{figure}
%%%%%%%%%%%%%%%%%%%%%%%%%%%%%%%%
%Statistics
\section{System Model}
We consider the classic three-node model, as illustrated in Fig.~\ref{sistema1}, where a source node Alice ($\mathrm{A}$)
sends confidential information to a legitimate destination node Bob ($\mathrm{B}$), while an eavesdropper Eve ($\mathrm{E}$) attempts to intercept this information through the eavesdropper channel. In this system, all nodes, i.e., the transmitter, the receiver, and the eavesdropper, are equipped with multiple antennas denoted by $N_{\mathrm{A}}$, $N_{\mathrm{B}}$, and $N_{\mathrm{E}}$, respectively. 
\subsection{Channel Model}
We assume that both main and eavesdropper channels are subject to i.i.d. quasi-static $\kappa$-$\mu$ shadowed fading. In that way, the PDF and CDF of the instantaneous SNR of the RV $\gamma$ following $\kappa$-$\mu$ shadowed fading can be expressed as a finite mixture of gamma distributions~\cite{javiermistura}\footnote{Noteworthy, the PDF and CDF of the $\kappa$-$\mu$ shadowed distribution can be represented in many ways $(i)$ hypergeometric functions as proposed in its original format~\cite{paris}; $(ii)$ an infinite
series in terms of Laguerre polynomials~\cite{chun}, and
$(iii)$ an infinite~\cite{espinosa} and finite~\cite{javiermistura} mixture of gamma distributions. In this work, we stick to the last one because of its mathematically tractable expressions, well-suited to dealing with TAS/MRC systems.} 

\begin{itemize}
  \item \textbf{If} $m<\mu$ 
\end{itemize}

\begin{subequations}
	\label{eq1}
	\begin{align}
	\label{eq:pdfV1}
	f_{\gamma}(\gamma) & =\sum_{j=1}^{\mu-m}A_{1,j}f_\gamma^\mathrm{G}\left ( \omega_{A1};\mu-m-j+1;\gamma \right )
	 \nonumber \\
 & +\sum_{j=1}^{m}A_{2,j}f_\gamma^\mathrm{G}\left ( \omega_{A2};m-j+1;\gamma \right ),
	\\
	\label{eq:cdfV1}
	F_{\gamma }(\gamma) & =1-\sum_{j=1}^{\mu-m}A_{1,j}\exp\left ( -\frac{\gamma}{\Delta_1} \right )\sum_{r=0}^{\mu-m-j}\frac{1}{r!}\left ( \frac{\gamma}{\Delta_1} \right )^{r}\nonumber \\
 & -\sum_{j=1}^{m}A_{2,j}\exp\left ( -\frac{\gamma}{\Delta_2} \right )\sum_{r=0}^{m-j}\frac{1}{r!}\left ( \frac{\gamma}{\Delta_2} \right )^{r},
	\end{align}
\end{subequations}
\begin{itemize}
  \item \textbf{If} $m\geq \mu$
\end{itemize}

\begin{subequations}
	\label{eq2}
	\begin{align}
	\label{eq:pdfV2}
	f_{\gamma }(\gamma) & =\sum_{j=0}^{m-\mu}B_{j}f_\gamma^\mathrm{G}\left ( \omega_{B};m-j;\gamma \right ),
	\\
	\label{eq:cdfV2}
	F_{\gamma }(\gamma) & = 1-\sum_{j=0}^{m-\mu}B_{j}\exp\left ( - \frac{\gamma}{\Delta_2} \right )\sum_{r=0}^{m-j-1}\frac{1}{r!}\left ( \frac{\gamma}{\Delta_2} \right )^r,
	\end{align}
\end{subequations}
where $f_X^\mathrm{G}\left ( \overline{\gamma};\tilde{m};x \right )$ denotes the PDF of a RV $X$ that follows a gamma distribution, defined as

\begin{align}\label{eq3}
f_X^\mathrm{G}\left ( \overline{\gamma};\tilde{m};x \right )=\left ( \frac{\tilde{m}}{\overline{\gamma}}\right )^{\tilde{m} } \frac{x^{\tilde{m}-1 }}{(\tilde{m}-1 )!}\exp\left ( - \frac{x\tilde{m} }{ \overline{\gamma}} \right ),   
\end{align}
and
\begin{align}\label{eq4}
A_{1,j}=&\left ( -1 \right )^m\binom{m+j-2}{j-1}    \left ( \frac{m}{\mu \kappa+m} \right )^m\left ( \frac{\mu \kappa}{\mu \kappa+m} \right )^{-m-j+1}, \nonumber \\
A_{2,j}=&\left ( -1 \right )^{j-1}\binom{\mu-m+j-2}{j-1}\nonumber \\ & \times \left ( \frac{m}{\mu \kappa+m} \right )^{j-1}\left ( \frac{\mu \kappa}{\mu \kappa+m} \right )^{m-\mu-j+1}, \nonumber \\
B_{j}=&\binom{m-\mu}{j}  \left ( \frac{m}{\mu \kappa+m} \right )^{j}\left ( \frac{\mu \kappa}{\mu \kappa+m} \right )^{m-\mu-j},
\end{align}
and
\begin{align}\label{eq5}
\omega_{A1}=&\Delta_1\left ( \mu-m-j+1 \right ), \nonumber \\ 
\omega_{A2}=&\Delta_2\left ( m-j+1 \right ),\nonumber \\
\omega_{B}=&\Delta_2\left ( m-j \right ),
\end{align}
where 
\begin{align}\label{eq6}
\Delta_{1}=&\frac{\overline{\gamma}}{\mu\left ( 1+\kappa \right )}
, \nonumber \\ 
\Delta_{2}=&\frac{\mu \kappa+m}{m}\frac{\overline{\gamma}}{\mu\left ( 1+\kappa \right )}.
\end{align}

In these expressions, $\overline{\gamma}=\mathbb{E} \left [ \gamma\right ]$ is the average SNR. Besides, $\mu$, $m$, and $\kappa$ are the fading parameters that denote the number of the multipath clusters, the shadowing severity index, and the ratio between the total power of the dominant components associated to the line-of-sight (LOS) and the total power of the scattered waves, respectively. Finally, it is worth mentioning that the CDFs given in~\eqref{eq:cdfV1} and~\eqref{eq:cdfV2} should be reformulated in order to derive the maximum of i.i.d.  $\kappa$-$\mu$ shadowed RVs, as will be seen in appendix A.
% figure sistema 1

\subsection{Transmission Scheme}

In our MIMO wiretap system, the optimum TAS protocol selects the strongest antenna for transmission, i.e. the one that maximizes the instantaneous SNR between Alice and Bob. From a secrecy perspective, this allows to maximize the channel capacity and fully
exploit the multi-antenna diversity at the transmitter, while the optimum TAS for Bob corresponds to a random transmit antenna for Eve. Moreover, we assume that the MRC technique is employed at both Bob and Eve. Therefore, the index of the selected antenna at the transmitter, denoted by $k^*$, is determined by
\begin{align}\label{eq7}
k^{*}=\arg  \underset{1\leq k\leq N_{\mathrm{A}}}{\max}  \sum_{l=1}^{N_{\mathrm{B}}}\left | h_{k,l} \right |^{2},  
\end{align}
where $h_{k,l}$ is the channel coefficient of the link between $k$-th transmitting antenna at Alice and $l$-th receive antenna at Bob. This index is informed to Alice through a feedback channel. Then, under a TAS/MRC setup, the received signals at the $l$-th antenna of Bob and at the $r$-th $(1\leq f\leq N_{\mathrm{E}})$ antenna of Eve are given by
\begin{subequations}
\label{eq8}
\begin{align}
\label{eq:Bob}  
y_{\mathrm{B},l}=\sqrt{P}h_{k^*,l}x+n_l,
\end{align}
 \begin{align}\label{eq:Eve}
y_{\mathrm{E},f}=\sqrt{P}g_{k^*,f}x+n_f,
\end{align}
\end{subequations}
where $P$ is is the average transmit power, $x$ denotes the secret message to be transmitted, $h_{k^*,l}$ is the channel coefficients of the link between the selected antenna $k^*$ at Alice and the $l$-th receive antenna at Bob. Likewise, $g_{k^*,f}$ is the channel coefficient of the link between the selected antenna $k^*$ at Alice, and the $f$-th receive antenna at Eve. Besides, $n_l$ and $n_f$ are additive white complex Gaussian noise at 
the receivers of the $l$-th antenna of Bob and at the $f$-th antenna of Eve with zero mean and variance $\sigma_w^2$ with $w \in \left \{ \mathrm{B},\mathrm{E} \right \}$, respectively. Based on~\eqref{eq8}, the corresponding instantaneous SNRs at the receivers can be expressed as
\begin{subequations}
\label{eq9}
\begin{align}
\label{eq:snrbob}  
\gamma_{\mathrm{B}}=\frac{P \sum_{l=1}^{N_{\mathrm{B}}}\left | h_{k^*,l} \right |^{2}}{\sigma_{\mathrm{B}}^2},
\end{align}
 \begin{align}\label{eq:snreve}
\gamma_{\mathrm{E}}=\frac{P \sum_{f=1}^{N_{\mathrm{E}}}\left | g_{k^*,f} \right |^{2}}{\sigma_{\mathrm{E}}^2}.
\end{align}
\end{subequations}
\subsection{Channel Statistics}
In this section, we present the framework followed for obtaining the statistics of the main and eavesdropper channels, which will be used on the secrecy analysis in the next sections.

Regarding the eavesdropper channel, let $\gamma_{k^*,f}=\tfrac{P \left | g_{k^*,f} \right |^{2}}{\sigma^2_\mathrm{E}}$ be the instantaneous received SNR of the $f$-th diversity branch of the MRC receiver at Eve. Now, by considering  $N_{\mathrm{E}}$ i.i.d. $\kappa$-$\mu$ shadowed RVs, i.e., $\gamma_{k^*,f}\sim \left (\overline{\gamma}_{\mathrm{E}},\kappa_{\mathrm{E}},\mu_{\mathrm{E}},m_{\mathrm{E}}  \right )$ for $f=\left \{1,\dots,N_{\mathrm{E}}\right \}$, the sum of these RVs is another $\kappa$-$\mu$ shadowed RV with scaled parameters, i.e., $\gamma_{\mathrm{E}}\sim \left (N_{\mathrm{E}}\overline{\gamma}_{\mathrm{E}},\kappa_{\mathrm{E}},N_{\mathrm{E}}\mu_{\mathrm{E}},N_{\mathrm{E}}m_{\mathrm{E}}  \right )$~\cite[Proposition 1]{paris}. Therefore, from~\eqref{eq:snreve} the corresponding PDF and CDF at Eve are respectively given by
\begin{itemize}
  \item \textbf{If} $m_\mathrm{E}<\mu_\mathrm{E}$
\end{itemize}
\begin{subequations}
\label{eq10}
\begin{align}
\label{eq:pdfV1Eve}  
f_{\gamma_{\mathrm{E}}}(\gamma_\mathrm{E})=&\sum_{j=1}^{\eta_\mathrm{E}}A_{1,j}^{\mathrm{E}}f_\mathrm{G}\left ( \omega_{A1}^{\mathrm{E}};\eta_\mathrm{E}-j+1;\gamma_\mathrm{E} \right ) \nonumber \\
 & +\sum_{j=1}^{\nu_\mathrm{E}}A_{2,j}^{\mathrm{E}}f_\mathrm{G}\left ( \omega_{A2}^{\mathrm{E}};\nu_\mathrm{E}-j+1;\gamma_\mathrm{E} \right ),
\end{align}
 \begin{align}\label{eq:cdfV1Eve}
F_{\gamma_{\mathrm{E}}}(\gamma_\mathrm{E})=&1-\sum_{j=1}^{\eta_\mathrm{E}}A_{1,j }^{\mathrm{E}}\exp\left ( -\frac{\gamma_\mathrm{E}}{\Delta_1^{\mathrm{E}}} \right )\sum_{r=0}^{\eta_\mathrm{E}-j}\frac{1}{r!} \left ( \frac{\gamma_\mathrm{E}}{\Delta_1^{\mathrm{E}}} \right )^{r}\nonumber \\
 &  -\sum_{j=1}^{\nu_\mathrm{E}}A_{2,j}^{\mathrm{E}}\exp\left ( -\frac{\gamma_\mathrm{E}}{\Delta_2^{\mathrm{E}}} \right )\sum_{r=0}^{\nu_\mathrm{E}-j}\frac{1}{r!}\left ( \frac{\gamma_\mathrm{E}}{\Delta_2^{\mathrm{E}}} \right )^{r},
\end{align}
\end{subequations}
where $\eta_\mathrm{E}=N_\mathrm{E}(\mu_\mathrm{E}-m_\mathrm{E})$, and $\nu_\mathrm{E}=N_\mathrm{E}m_\mathrm{E}$.
\begin{itemize}
  \item \textbf{If} $m_\mathrm{E}\geq \mu_\mathrm{E}$
\end{itemize}

\begin{subequations}
\label{eq11}
\begin{align}
\label{eq:pdfV2Eve}  
f_{\gamma_{\mathrm{E}}}(\gamma_\mathrm{E})=&\sum_{j=0}^{\beta_\mathrm{E}}B_{j}^\mathrm{E}f_\mathrm{G}\left ( \omega_{B}^\mathrm{E};\nu_\mathrm{E}-j;\gamma_\mathrm{E} \right ) ,
\end{align}
 \begin{align}\label{eq:cdfV2Eve}
F_{\gamma_{\mathrm{E}}}(\gamma_\mathrm{E})=&1-\sum_{j=0}^{\beta_\mathrm{E}}B_{j}^\mathrm{E}\exp\left ( - \frac{\gamma_\mathrm{E}}{\Delta_2^\mathrm{E}} \right )\sum_{r=0}^{\nu_\mathrm{E}-j-1}\frac{1}{r!}\left ( \frac{\gamma_\mathrm{E}}{\Delta_2^\mathrm{E}} \right )^r,
\end{align}
\end{subequations}
where $\beta_\mathrm{E}=N_\mathrm{E}(m_\mathrm{E}-\mu_\mathrm{E})$. For notational convenience, all the coefficients marked with superscripts $\mathrm{E}$ (e.g., $\Delta_1^{\mathrm{E}}$) refer to the fading parameters at Eve, which can be obtained from~\eqref{eq4} to~\eqref{eq6} by substituting $\overline{\gamma}$, $\mu$, $m$ and $\kappa$ by $ N_{\mathrm{E}} \overline{\gamma}_{\mathrm{E}}$, 
$ N_{\mathrm{E}} \mu_{\mathrm{E}}$, $ N_{\mathrm{E}} m_{\mathrm{E}}$, and $\kappa_{\mathrm{E}}$, respectively. 

Regarding the legitimate link, let $\gamma_{k^*,l}=\tfrac{P \left |h_{k^*,l} \right |^{2}}{\sigma^2_\mathrm{B}}$ be the instantaneous received SNR of the $l$-th diversity branch of the MRC receiver at Bob, then the CDF and PDF of $\gamma_\mathrm{B}=\sum_{l=1}^{N_\mathrm{B}}\gamma_{k^*,l}$ are respectively given in the following propositions. 

%%%% proposition 1
\begin{prop}\label{Propo1}
The CDF of $\gamma_\mathrm{B}$ is given by  
\begin{itemize}
  \item $\mathrm{If}$ $m_\mathrm{B}<\mu_\mathrm{B}$
\end{itemize}

\begin{align}
\label{eq12}
F_{\gamma_{\mathrm{B}}}(\gamma_\mathrm{B})=&1+\sum_{k=1}^{N_\mathrm{A}}(-1)^k\binom{N_\mathrm{A}}{k} \sum_{c=0}^{k}\binom{k}{c} \sum_{\rho\left (c,\nu_\mathrm{B}  \right ) }^{ } \frac{c!}{p_1!\cdots p_{\nu_\mathrm{B}}!} \nonumber \\ & \times \left[ \prod_{q=1}^{\nu_\mathrm{B}}\left(   \frac{\left ( \tfrac{1 }{\Delta_2^\mathrm{B}} \right )^{\nu_\mathrm{B}-q}  }{(\nu_\mathrm{B}-q)!} \sum_{z=\nu_\mathrm{B}+1-q}^{\nu_\mathrm{B}} A_{2,\nu_\mathrm{B}+1-z}^\mathrm{B} \right)^{p_q}\right]
\nonumber \\ & \times  \exp\left (-\gamma_\mathrm{B} \left ( \tfrac{k-c}{\Delta_1^\mathrm{B}} \right ) \right ) \sum_{\rho\left (k-c,\eta_\mathrm{B}  \right ) }^{ } \frac{(k-c)!}{s_1!\cdots s_{\eta_\mathrm{B}}!} \nonumber \\ & \times  \left [ \prod_{t=1}^{\eta_\mathrm{B}} \left (  \frac{\left ( \frac{1 }{\Delta_1^\mathrm{B}} \right )^{\eta_\mathrm{B}-t} }{(\eta_\mathrm{B}-t)!}   \sum_{z=\eta_\mathrm{B}+1-t}^{\eta_\mathrm{B}} A_{1,\eta_\mathrm{B}+1-z}^\mathrm{B}   \right )^{s_t}\right]
\nonumber \\ & \times \exp\left (-\gamma_\mathrm{B} \left (  \tfrac{c}{\Delta_2^\mathrm{B}}\right ) \right )  \gamma_\mathrm{B}^{\sum_{t=1 }^{\eta_\mathrm{B}}(\eta_\mathrm{B}-t)s_t+\sum_{q=1 }^{\nu_\mathrm{B} }(\nu_\mathrm{B}-q)p_q} ,
\end{align}
\end{prop}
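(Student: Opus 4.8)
The plan is to exploit the fact that, conditioned on the TAS rule~\eqref{eq7}, the legitimate SNR is an extreme-value statistic. For each transmit antenna $k$, the MRC output $\sum_{l=1}^{N_\mathrm{B}}\gamma_{k,l}$ is a sum of $N_\mathrm{B}$ i.i.d.\ $\kappa$-$\mu$ shadowed variates, which by~\cite[Proposition~1]{paris} is again $\kappa$-$\mu$ shadowed with parameters $\bigl(N_\mathrm{B}\overline{\gamma}_\mathrm{B},\kappa_\mathrm{B},N_\mathrm{B}\mu_\mathrm{B},N_\mathrm{B}m_\mathrm{B}\bigr)$; since the $N_\mathrm{A}$ per-antenna sums are mutually independent, $\gamma_\mathrm{B}=\max_{1\le k\le N_\mathrm{A}}\sum_{l=1}^{N_\mathrm{B}}\gamma_{k,l}$ has CDF $F_{\gamma_\mathrm{B}}(\gamma)=\bigl[F_{\gamma}(\gamma)\bigr]^{N_\mathrm{A}}$, where $F_{\gamma}$ is the $\kappa$-$\mu$ shadowed CDF with the scaled parameters above. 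Hence it remains to raise~\eqref{eq:cdfV1} to the $N_\mathrm{A}$-th power in closed form, which is precisely what motivates the reformulation of the CDF announced at the end of Section~II-A.

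First I would rewrite~\eqref{eq:cdfV1}, with the scaled parameters (so that $\mu-m\mapsto\eta_\mathrm{B}=N_\mathrm{B}(\mu_\mathrm{B}-m_\mathrm{B})$ and $m\mapsto\nu_\mathrm{B}=N_\mathrm{B}m_\mathrm{B}$), by interchanging the order of the two nested finite sums and collecting terms by powers of $\gamma$. For the first branch this turns $\sum_{j=1}^{\eta_\mathrm{B}}A^\mathrm{B}_{1,j}\sum_{r=0}^{\eta_\mathrm{B}-j}(\cdot)$ into $\sum_{r=0}^{\eta_\mathrm{B}-1}(\cdot)\sum_{j=1}^{\eta_\mathrm{B}-r}A^\mathrm{B}_{1,j}$, and the changes of index $t=\eta_\mathrm{B}-r$ and $z=\eta_\mathrm{B}+1-j$ yield the inner coefficient $\sum_{z=\eta_\mathrm{B}+1-t}^{\eta_\mathrm{B}}A^\mathrm{B}_{1,\eta_\mathrm{B}+1-z}$ appearing in~\eqref{eq12}; the second branch is handled identically. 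The outcome is the compact two-term representation
\[ F_{\gamma}(\gamma)=1-e^{-\gamma/\Delta_1^\mathrm{B}}\,\mathcal{P}_1(\gamma)-e^{-\gamma/\Delta_2^\mathrm{B}}\,\mathcal{P}_2(\gamma), \]
where $\mathcal{P}_1,\mathcal{P}_2$ are polynomials of degrees $\eta_\mathrm{B}-1$ and $\nu_\mathrm{B}-1$ whose coefficients are exactly the bracketed quantities in~\eqref{eq12}. This reformulation (Appendix~A) is the crux of the argument, since only in this ``one-exponential-per-branch'' form does the $N_\mathrm{A}$-th power admit a finite expansion.

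Next I would apply the binomial theorem, $\bigl[F_{\gamma}(\gamma)\bigr]^{N_\mathrm{A}}=\sum_{k=0}^{N_\mathrm{A}}\binom{N_\mathrm{A}}{k}(-1)^k\bigl(e^{-\gamma/\Delta_1^\mathrm{B}}\mathcal{P}_1+e^{-\gamma/\Delta_2^\mathrm{B}}\mathcal{P}_2\bigr)^k$, isolating the $k=0$ term (which contributes the leading $1$), followed by a second binomial expansion of the inner $k$-th power into $\sum_{c=0}^{k}\binom{k}{c}\bigl(e^{-\gamma/\Delta_2^\mathrm{B}}\mathcal{P}_2\bigr)^{c}\bigl(e^{-\gamma/\Delta_1^\mathrm{B}}\mathcal{P}_1\bigr)^{k-c}$, which produces the exponential factors $e^{-\gamma c/\Delta_2^\mathrm{B}}$ and $e^{-\gamma(k-c)/\Delta_1^\mathrm{B}}$. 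Finally I would expand each polynomial power $\mathcal{P}_2^{\,c}$ and $\mathcal{P}_1^{\,k-c}$ by the multinomial theorem over the compositions $\rho(c,\nu_\mathrm{B})=\{(p_1,\dots,p_{\nu_\mathrm{B}}):\sum_q p_q=c\}$ and $\rho(k-c,\eta_\mathrm{B})$, obtaining the multinomial coefficients $c!/(p_1!\cdots p_{\nu_\mathrm{B}}!)$ and $(k-c)!/(s_1!\cdots s_{\eta_\mathrm{B}}!)$, the products of coefficient powers, and the aggregate power of $\gamma$ equal to $\sum_{t}(\eta_\mathrm{B}-t)s_t+\sum_{q}(\nu_\mathrm{B}-q)p_q$; collecting everything reproduces~\eqref{eq12}. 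The only genuine difficulty is the bookkeeping in the reformulation step --- getting the re-indexed inner coefficients and the polynomial degrees right --- after which the two binomial expansions and the multinomial expansion are routine; the $m_\mathrm{B}\ge\mu_\mathrm{B}$ case (stated separately) follows along the same lines starting from~\eqref{eq:cdfV2}.
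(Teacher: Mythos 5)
Your proposal is correct and follows essentially the same route as the paper's Appendix~A: write $F_{\gamma_\mathrm{B}}=\bigl[F_{\gamma_1}\bigr]^{N_\mathrm{A}}$ using the i.i.d.\ per-antenna MRC sums and~\cite[Proposition~1]{paris}, re-index the nested sums of the $\kappa$-$\mu$ shadowed CDF into the ``one exponential times a polynomial per branch'' form of~\eqref{eq2ApeA}, then apply the binomial expansion twice and the multinomial theorem to the two polynomial factors. Your re-indexing ($t=\eta_\mathrm{B}-r$, $z=\eta_\mathrm{B}+1-j$) and the resulting inner coefficient sums match the paper's reformulation exactly, so no gap remains.
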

where $\eta_\mathrm{B}=N_\mathrm{B}(\mu_\mathrm{B}-m_\mathrm{B})$, $\nu_\mathrm{B}=N_\mathrm{B}m_\mathrm{B}$. 
As in the previous case, all the coefficients marked with superscripts $\mathrm{B}$ (e.g., $\Delta_1^{\mathrm{B}}$) refer to the fading parameters at Bob, which can be obtained from~\eqref{eq4} to~\eqref{eq6} by substituting  $\overline{\gamma}$ for $ N_{\mathrm{B}} \overline{\gamma}_{\mathrm{B}}$, $\mu$ for $ N_{\mathrm{B}} \mu_{\mathrm{B}}$, $m$ for $ N_{\mathrm{B}} m_{\mathrm{B}}$, and $\kappa$ for $  \kappa_{\mathrm{B}}$. Also, based on the multinomial theorem~\cite[Eq.~(24.1.2)]{Abramowitz}, we have that $\rho\left (k-c,\eta_\mathrm{B}  \right )=\left \{ \left ( s_1,s_2,\cdots,s_{\eta_\mathrm{B}} \right ):s_t \in\mathbb{N},\sum_{t=1}^{\eta_\mathrm{B}} s_t=k-c\right \}$, and similarly $\rho\left (c,\nu_\mathrm{B}  \right )=\left \{ \left ( p_1,p_2,\cdots,p_{\nu_\mathrm{B}} \right ):p_q \in\mathbb{N},\sum_{q=1}^{\nu_\mathrm{B}} p_q=c\right \}$.

\begin{itemize}
  \item $\mathrm{If}$ $m_\mathrm{B}\geq \mu_\mathrm{B}$
\end{itemize}
\begin{align}
\label{eq13}
F_{\gamma_{\mathrm{B}}}(\gamma_\mathrm{B})=&1+\sum_{k=1}^{N_\mathrm{A}}(-1)^k\binom{N_\mathrm{A}}{k} \sum_{\rho\left (k,\nu_\mathrm{B}  \right ) }^{ } \frac{k!}{s_1!\cdots s_{\nu_\mathrm{B}}!} \nonumber \\ & \times \left[ \prod_{t=1}^{\nu_\mathrm{B}} \left( \frac{\left ( \tfrac{1 }{\Delta_2^\mathrm{B}} \right )^{\nu_\mathrm{B}-t} }{(\nu_\mathrm{B}-t)!} \sum_{z=\beta_\mathrm{B}+1-\mathcal{T}(j-1)}^{\beta_\mathrm{B}}B_{\beta_\mathrm{B}-z}^\mathrm{B}\right)^{s_t}\right]\nonumber \\ & \times \exp\left (-\gamma_\mathrm{B} \left (  \tfrac{k}{\Delta_2^\mathrm{B}}\right ) \right ) \gamma_\mathrm{B}^{\sum_{t=1 }^{\nu_\mathrm{B}}(\nu_\mathrm{B}-t)s_t},
\end{align}
where $\rho\left (k,\nu_\mathrm{B}  \right )=\left \{ \left ( s_1,s_2,\cdots,s_{\nu_\mathrm{B}} \right ):s_t \in\mathbb{N},\sum_{t=1}^{\nu_\mathrm{B}} s_t=k\right \}$ and $\beta_\mathrm{B}=N_\mathrm{B}(m_\mathrm{B}-\mu_\mathrm{B})$.

\begin{proof}
	See Appendix~\ref{ap:cdfBob}.
\end{proof}
%%%% proposition 2
\begin{prop}\label{Propo2}
From~\eqref{eq12} and~\eqref{eq13}, the PDFs of $\gamma_\mathrm{B}$ can be obtained as
\begin{itemize}
  \item $\mathrm{If}$ $m_\mathrm{B}<\mu_\mathrm{B}$
\end{itemize}
\begin{align}
\label{eq14}
f_{\gamma_{\mathrm{B}}}(\gamma_\mathrm{B})=&\sum_{k=1}^{N_\mathrm{A}}(-1)^k\binom{N_\mathrm{A}}{k} \sum_{c=0}^{k}\binom{k}{c} \sum_{\rho\left (c,\nu_\mathrm{B}  \right ) }^{ } \frac{c!}{p_1!\cdots p_{\nu_\mathrm{B}}!} \nonumber \\ & \times
 \left[ \prod_{q=1}^{\nu_\mathrm{B}}\left(   \frac{\left ( \tfrac{1 }{\Delta_2^\mathrm{B}} \right )^{\nu_\mathrm{B}-q}  }{(\nu_\mathrm{B}-q)!} \sum_{z=\nu_\mathrm{B}+1-q}^{\nu_\mathrm{B}} A_{2,\nu_\mathrm{B}+1-z}^\mathrm{B} \right)^{p_q}\right]
\nonumber \\ & \times  \frac{\exp\left ( -\gamma_\mathrm{B}     \left ( \frac{k-c}{\Delta_1^\mathrm{B}}+\frac{c}{\Delta_2^\mathrm{B}} \right ) \right )}{\Delta_1^\mathrm{B}\Delta_2^\mathrm{B}}  \sum_{\rho\left (k-c,\eta_\mathrm{B}  \right ) }^{ } \frac{(k-c)!}{s_1!\cdots s_{\eta_\mathrm{B}}!}   \nonumber \\ & \times \left [ \prod_{t=1}^{\eta_\mathrm{B}} \left (  \frac{\left ( \frac{1 }{\Delta_1^\mathrm{B}} \right )^{\eta_\mathrm{B}-t} }{(\eta_\mathrm{B}-t)!}   \sum_{z=\eta_\mathrm{B}+1-t}^{\eta_\mathrm{B}} A_{1,\eta_\mathrm{B}+1-z}^\mathrm{B}   \right )^{s_t}\right]
\nonumber \\& \times
\gamma_\mathrm{B}^{-1+\sum_{t=1 }^{\eta_\mathrm{B}}(\eta_\mathrm{B}-t)s_t+\sum_{q=1 }^{\nu_\mathrm{B} }(\nu_\mathrm{B}-q)p_q} \nonumber \\ &\times
\Biggr(\Delta_1^\mathrm{B}\Delta_2^\mathrm{B} \left ( \sum_{t=1 }^{\eta_\mathrm{B}}(\eta_\mathrm{B}-t)s_t+\sum_{q=1 }^{\nu_\mathrm{B} }(\nu_\mathrm{B}-q)p_q  \right ) \nonumber \\ &-\gamma_\mathrm{B} \left ( \Delta_1^\mathrm{B} c-\Delta_2^\mathrm{B}\left ( c-k \right ) \right )   \Biggr).
\end{align}
\begin{itemize}
  \item $\mathrm{If}$ $m_\mathrm{B}\geq \mu_\mathrm{B}$
\end{itemize}

\begin{align}
\label{eq15}
f_{\gamma_{\mathrm{B}}}(\gamma_\mathrm{B})=&\sum_{k=1}^{N_\mathrm{A}}(-1)^k\binom{N_\mathrm{A}}{k} \sum_{\rho\left (k,\nu_\mathrm{B}  \right ) }^{ } \frac{k!}{s_1!\cdots s_{\nu_\mathrm{B}}!} \nonumber \\ & \times \left[ \prod_{t=1}^{\nu_\mathrm{B}} \left( \frac{\left ( \tfrac{1 }{\Delta_2^\mathrm{B}} \right )^{\nu_\mathrm{B}-t} }{(\nu_\mathrm{B}-t)!} \sum_{z=\beta_\mathrm{B}+1-\mathcal{T}(j-1)}^{\beta_\mathrm{B}}B_{\beta_\mathrm{B}-z}^\mathrm{B}\right)^{s_t}\right]\nonumber \\ & \times \frac{\exp\left ( -\frac{k \gamma_\mathrm{B}}{\Delta_2^\mathrm{B}} \right )}{\Delta_2^\mathrm{B}} \gamma_\mathrm{B}^{-1+\sum_{t=1 }^{\nu_\mathrm{B}}(\nu_\mathrm{B}-t)s_t}\nonumber \\ & \times
\left ( \Delta_2^\mathrm{B}\sum_{t=1 }^{\nu_\mathrm{B}}(\nu_\mathrm{B}-t)s_t -k \gamma_\mathrm{B} \right ).
\end{align}
\end{prop}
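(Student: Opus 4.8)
The plan is to obtain~\eqref{eq14} and~\eqref{eq15} by direct differentiation of the corresponding CDFs in Proposition~\ref{Propo1}, exploiting the elementary fact that $f_{\gamma_\mathrm{B}}(\gamma_\mathrm{B})=\frac{d}{d\gamma_\mathrm{B}}F_{\gamma_\mathrm{B}}(\gamma_\mathrm{B})$. Both~\eqref{eq12} and~\eqref{eq13} are \emph{finite} sums (the outer sum over $k$ terminates at $N_\mathrm{A}$, the sums over the compositions $\rho(\cdot,\cdot)$ are finite by the multinomial theorem, and the products run over finitely many indices), so term-by-term differentiation is legitimate without any convergence argument. Moreover, all binomial and multinomial coefficients and all the bracketed products over $q$ and $t$ are independent of $\gamma_\mathrm{B}$, and the leading constant $1$ differentiates to $0$; hence only one factor per summand actually needs to be differentiated.

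First I would isolate that $\gamma_\mathrm{B}$-dependent factor in each summand. For $m_\mathrm{B}<\mu_\mathrm{B}$ it is $\exp\!\big(-\gamma_\mathrm{B}(\tfrac{k-c}{\Delta_1^\mathrm{B}}+\tfrac{c}{\Delta_2^\mathrm{B}})\big)\gamma_\mathrm{B}^{\,n}$ with $n=\sum_{t=1}^{\eta_\mathrm{B}}(\eta_\mathrm{B}-t)s_t+\sum_{q=1}^{\nu_\mathrm{B}}(\nu_\mathrm{B}-q)p_q$, and for $m_\mathrm{B}\geq\mu_\mathrm{B}$ it is $\exp(-k\gamma_\mathrm{B}/\Delta_2^\mathrm{B})\gamma_\mathrm{B}^{\,n}$ with $n=\sum_{t=1}^{\nu_\mathrm{B}}(\nu_\mathrm{B}-t)s_t$. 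I would then apply the single identity $\frac{d}{d\gamma}\big[e^{-a\gamma}\gamma^{n}\big]=\gamma^{n-1}e^{-a\gamma}\,(n-a\gamma)$, which holds uniformly for every integer $n\geq 0$, with $a=\tfrac{k-c}{\Delta_1^\mathrm{B}}+\tfrac{c}{\Delta_2^\mathrm{B}}$ in the first case and $a=k/\Delta_2^\mathrm{B}$ in the second.

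The remaining step is purely cosmetic rearrangement to match the stated forms. In the case $m_\mathrm{B}<\mu_\mathrm{B}$, multiplying and dividing the factor $(n-a\gamma_\mathrm{B})$ by $\Delta_1^\mathrm{B}\Delta_2^\mathrm{B}$ gives $\frac{1}{\Delta_1^\mathrm{B}\Delta_2^\mathrm{B}}\big(\Delta_1^\mathrm{B}\Delta_2^\mathrm{B}\,n-\gamma_\mathrm{B}(c\,\Delta_1^\mathrm{B}+(k-c)\Delta_2^\mathrm{B})\big)$, and using $c\,\Delta_1^\mathrm{B}+(k-c)\Delta_2^\mathrm{B}=\Delta_1^\mathrm{B}c-\Delta_2^\mathrm{B}(c-k)$ this is exactly the last bracket in~\eqref{eq14}, with the prefactor $1/(\Delta_1^\mathrm{B}\Delta_2^\mathrm{B})$ absorbed as shown there. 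In the case $m_\mathrm{B}\geq\mu_\mathrm{B}$, factoring $1/\Delta_2^\mathrm{B}$ out of $(n-a\gamma_\mathrm{B})$ yields $\frac{1}{\Delta_2^\mathrm{B}}\big(\Delta_2^\mathrm{B}\,n-k\gamma_\mathrm{B}\big)$, which reproduces~\eqref{eq15}. Collecting the constants that rode through the derivative then gives the two claimed PDFs. There is no genuine difficulty here; the only care needed is in tracking which factors depend on $\gamma_\mathrm{B}$ and in handling the sign when rewriting $\Delta_2^\mathrm{B}(k-c)$ as $-\Delta_2^\mathrm{B}(c-k)$, so the ``main obstacle'' is bookkeeping rather than any mathematical subtlety.
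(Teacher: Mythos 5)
Your proposal is correct and coincides with what the paper does implicitly: Proposition~\ref{Propo2} carries no separate proof precisely because the PDFs follow from differentiating the CDFs of Proposition~\ref{Propo1}, with only the factor $e^{-a\gamma_\mathrm{B}}\gamma_\mathrm{B}^{n}$ depending on $\gamma_\mathrm{B}$ in each summand. Your identity $\frac{d}{d\gamma}[e^{-a\gamma}\gamma^{n}]=\gamma^{n-1}e^{-a\gamma}(n-a\gamma)$ and the subsequent rewriting of $c\,\Delta_1^\mathrm{B}+(k-c)\Delta_2^\mathrm{B}$ as $\Delta_1^\mathrm{B}c-\Delta_2^\mathrm{B}(c-k)$ reproduce \eqref{eq14} and \eqref{eq15} exactly.
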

%%%%%%%%%%%%%%%%%%%%%%%%%%%
%%%% secrecy 
\section{Secrecy Outage Probability Analysis}
\subsection{Exact SOP Analysis}
For the first scenario, we consider a silent eavesdropper whose CSI is not available for Alice. Therefore, Alice selects a constant secrecy rate $R_{\mathrm{S}}$ to transmit
messages to Bob. In practice, this setup is associated with a passive eavesdropping scenario. The secrecy capacity $C_\mathrm{S}$ is defined as~\cite{Wyner}
\begin{align}\label{eq16}
C_\mathrm{S}&=\!\text{max}\left \{C_\mathrm{B}-C_\mathrm{E},0  \right \},
\end{align}
in which $C_\mathrm{B}=\log_2(1+\gamma_\mathrm{B})$ and $C_\mathrm{E}=\log_2(1+\gamma_\mathrm{E})$
are the capacities of the main and eavesdropper channels, respectively. Note that secrecy can be guaranteed only if $R_\mathrm{S}\leq C_{\mathrm{S}}$, and is compromised otherwise. In this scenario, the SOP is a useful performance metric for measuring information leakage. The SOP is defined as the probability that the instantaneous $C_\mathrm{S}$ falls below a predefined target secrecy rate,
 $R_{\mathrm{S}}$, and this is expressed as~\cite{Barros}
 \begin{align}\label{eq17}
 \text{SOP}&=\Pr\left \{ C_\mathrm{S}\left ( \gamma_\mathrm{B},\gamma_\mathrm{E} \right ) < R_{\mathrm{S}}  \right \}\nonumber \\ 
 &=\Pr\left \{ \gamma_\mathrm{B}< \tau \gamma_\mathrm{E}+\tau-1 \right \} \nonumber \\ 
 &=\int_{0}^{\infty}F_{\gamma_\mathrm{B}}\left ( \tau \gamma_\mathrm{E}+\tau-1 \right )f_{\gamma_\mathrm{E}}(\gamma_\mathrm{E})d\gamma_\mathrm{E},
\end{align}
where $\tau\buildrel \Delta \over  = 2^{R_{\mathrm{S}}}$.

%%%%%%%%%%%%%%%%%%%%%%%%%%%%%%%%%%%%%%  EXACT SOP ANALYSIS
From this, the expression for the SOP can be obtained as stated in the following Proposition.
\begin{prop}\label{Propo3}
The SOP for $m_i<\mu_i$ and $ m_i\geq \mu_i$ with $i \in \left \{ \mathrm{B},\mathrm{E} \right \}$ over i.i.d. $\kappa$-$\mu$ shadowed fading channels can be
obtained as~\eqref{eq:SOPExactV1} and~\eqref{eq:SOPExactV2}, respectively, at the top of the next page.
\end{prop}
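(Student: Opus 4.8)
\emph{Proof outline.} The plan is to evaluate the single integral in~\eqref{eq17} in closed form by substituting the CDF of $\gamma_\mathrm{B}$ from Proposition~\ref{Propo1} and the PDF of $\gamma_\mathrm{E}$ from~\eqref{eq10}--\eqref{eq11}, and then reducing the integrand to a finite linear combination of elementary Gamma-function integrals. The two regimes of the statement ($m_i<\mu_i$ and $m_i\geq\mu_i$, $i\in\{\mathrm{B},\mathrm{E}\}$) are handled by the same chain of steps, so it suffices to describe one in detail. Take $m_\mathrm{B}<\mu_\mathrm{B}$ and $m_\mathrm{E}<\mu_\mathrm{E}$. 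Inspecting~\eqref{eq12}, for each choice of the outer index $k$, the binomial index $c$, and the two multinomial index vectors in $\rho(c,\nu_\mathrm{B})$ and $\rho(k-c,\eta_\mathrm{B})$, the entire dependence of $F_{\gamma_\mathrm{B}}(x)$ on its argument $x$ collapses to a single factor $x^{\Lambda}\exp(-\Xi x)$, where $\Lambda=\sum_{t}(\eta_\mathrm{B}-t)s_t+\sum_{q}(\nu_\mathrm{B}-q)p_q\in\mathbb{N}$ and $\Xi=(k-c)/\Delta_1^\mathrm{B}+c/\Delta_2^\mathrm{B}>0$; everything else multiplying it is a constant that I would absorb into a single coefficient. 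Setting $x=\tau\gamma_\mathrm{E}+\tau-1$ as dictated by~\eqref{eq17}, this factor splits as $\exp(-\Xi(\tau-1))\,(\tau\gamma_\mathrm{E}+\tau-1)^{\Lambda}\exp(-\Xi\tau\gamma_\mathrm{E})$.

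Next I would expand the polynomial with the binomial theorem, $(\tau\gamma_\mathrm{E}+\tau-1)^{\Lambda}=\sum_{i=0}^{\Lambda}\binom{\Lambda}{i}\tau^{i}(\tau-1)^{\Lambda-i}\gamma_\mathrm{E}^{i}$, introducing one more finite sum. Since the Eve PDF in~\eqref{eq:pdfV1Eve} (resp.~\eqref{eq:pdfV2Eve}) is itself a finite mixture of Gamma densities $f_\mathrm{G}(\omega;\tilde m;\gamma_\mathrm{E})\propto\gamma_\mathrm{E}^{\tilde m-1}\exp(-\tilde m\gamma_\mathrm{E}/\omega)$, the integrand of~\eqref{eq17} becomes, term by term, a constant times $\gamma_\mathrm{E}^{\,i+\tilde m-1}\exp(-(\Xi\tau+\tilde m/\omega)\gamma_\mathrm{E})$. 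Each such term integrates via $\int_{0}^{\infty}\gamma^{n}e^{-p\gamma}\,d\gamma=\Gamma(n+1)/p^{\,n+1}$ with $n=i+\tilde m-1\geq0$ and $p=\Xi\tau+\tilde m/\omega>0$, so convergence is automatic; the leading ``$1$'' of~\eqref{eq12} simply contributes $\int_{0}^{\infty}f_{\gamma_\mathrm{E}}(\gamma_\mathrm{E})\,d\gamma_\mathrm{E}=1$. Collecting the accumulated coefficients over all the finite sums yields~\eqref{eq:SOPExactV1}, and repeating the identical argument with~\eqref{eq13} and the Gamma-mixture PDF associated with~\eqref{eq:cdfV2Eve} yields~\eqref{eq:SOPExactV2}; the mixed regimes ($m_\mathrm{B}<\mu_\mathrm{B}$ with $m_\mathrm{E}\geq\mu_\mathrm{E}$, and vice versa) follow by pairing the corresponding branches.

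The difficulty here is purely one of bookkeeping rather than of mathematics: one must carry the nested finite sums of Proposition~\ref{Propo1} together with the extra binomial sum and the mixture sum of $f_{\gamma_\mathrm{E}}$ while keeping straight which indices the exponent $\Lambda$, the order $\tilde m$, and the rates $\Xi$ and $\tilde m/\omega$ depend on, and one must thread the constant offset $\tau-1$ (which vanishes when $R_\mathrm{S}=0$) consistently through the binomial expansion so that the surviving powers of $\tau-1$ and the $\Gamma$-factors line up. As a sanity check I would verify the limiting behaviour $\mathrm{SOP}\to1$ as $R_\mathrm{S}\to\infty$ and $\mathrm{SOP}\to0$ as $\overline{\gamma}_\mathrm{B}\to\infty$, and confirm that setting $N_\mathrm{A}=N_\mathrm{B}=N_\mathrm{E}=1$ recovers the known SISO $\kappa$-$\mu$ shadowed SOP of~\cite{nwdp,Jerez}.
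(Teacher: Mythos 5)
Your proposal is correct and follows essentially the same route as the paper's Appendix B: substitute the CDF of $\gamma_\mathrm{B}$ from Proposition~\ref{Propo1} and the Gamma-mixture PDF of $\gamma_\mathrm{E}$ into~\eqref{eq17}, observe that each term's argument-dependence reduces to $x^{\Lambda}e^{-\Xi x}$, expand $(\tau\gamma_\mathrm{E}+\tau-1)^{\Lambda}$ binomially, and evaluate the resulting elementary integrals via $\int_0^\infty\gamma^n e^{-p\gamma}d\gamma=\Gamma(n+1)p^{-n-1}$ (the paper's invocations of \cite[Eq.~(1.111), (3.351.2), (3.351.3)]{Gradshteyn}). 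Your handling of the leading ``$1$'' as the $k=0$ term and the bookkeeping of $\Lambda$, $\Xi$, and the offset $\tau-1$ all match the published derivation.
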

\begin{proof}
	See Appendix~\ref{ap:SOPs}.
\end{proof}

From~\eqref{eq17}, a high SNR approximation of the SOP, defined as $\text{SOP}_{\text{A}}$ can be expressed as
 \begin{align}\label{eq18}
 \text{SOP}_{\text{A}}&=\Pr\left \{ \gamma_\mathrm{B}< \tau \gamma_\mathrm{E}\right \} \leq \text{SOP} .
\end{align}

%%%%% asymptotic SOP
\subsection{Asymptotic SOP
}
In this section, we obtain an asymptotic closed-form expression for the SOP in order to gain more insights into the impact of the fading parameters on the secrecy performance of the proposed system. For that purpose, we consider the behaviour at the high SNR regime of the legitimate link, where $\overline{\gamma}_\mathrm{B}\rightarrow \infty$ while $\overline{\gamma}_\mathrm{E}$ is kept fixed, i.e., the case in which $\mathrm{A}$ is very close to $\mathrm{B}$ and $\mathrm{E}$ is located far away. Our aim is to express the asymptotic SOP expression in the form $\mathrm{SOP}^{\infty}\approx \mathrm{G}_c\overline{\gamma}_\mathrm{B}^{-\mathrm{G}_d}$~\cite{wang2003simple}, where $\mathrm{G}_c$ and $\mathrm{G}_d$ represent the secrecy array gain and the secrecy diversity gain (or diversity order), respectively. The expression for the asymptote of the SOP over $\kappa$-$\mu$ shadowed fading channels is given in the following Proposition.
\begin{prop}\label{Propo4}
The asymptotic closed-form expression of the SOP over i.i.d. $\kappa$-$\mu$ shadowed can be
obtained as~\eqref{eq:SOPAsin}, at the top of the next page.
\end{prop}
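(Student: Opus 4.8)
The plan is to start from the high-SNR lower bound $\mathrm{SOP}_{\mathrm{A}}=\int_{0}^{\infty}F_{\gamma_{\mathrm{B}}}(\tau\gamma_{\mathrm{E}})f_{\gamma_{\mathrm{E}}}(\gamma_{\mathrm{E}})\,d\gamma_{\mathrm{E}}$ of~\eqref{eq18} and to replace $F_{\gamma_{\mathrm{B}}}$ by its dominant term as $\overline{\gamma}_{\mathrm{B}}\to\infty$. The key structural observation is that, by the TAS rule~\eqref{eq7} together with Appendix~A, $\gamma_{\mathrm{B}}$ is the maximum of $N_{\mathrm{A}}$ i.i.d. copies of the $N_{\mathrm{B}}$-branch MRC sum, and by~\cite[Proposition~1]{paris} each such sum is itself $\kappa$-$\mu$ shadowed with aggregated parameters $(N_{\mathrm{B}}\overline{\gamma}_{\mathrm{B}},\kappa_{\mathrm{B}},N_{\mathrm{B}}\mu_{\mathrm{B}},N_{\mathrm{B}}m_{\mathrm{B}})$. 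Consequently $F_{\gamma_{\mathrm{B}}}(\gamma)=[\widetilde{F}(\gamma)]^{N_{\mathrm{A}}}$, with $\widetilde{F}$ the $\kappa$-$\mu$ shadowed CDF of those aggregated parameters, so the whole question reduces to the near-origin behaviour of a single $\kappa$-$\mu$ shadowed CDF.

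First I would establish that a $\kappa$-$\mu$ shadowed PDF with parameters $(\overline{\gamma},\kappa,\mu,m)$ satisfies $f_{\gamma}(\gamma)\simeq\frac{\mu^{\mu}m^{m}(1+\kappa)^{\mu}}{\Gamma(\mu)(\mu\kappa+m)^{m}}\frac{\gamma^{\mu-1}}{\overline{\gamma}^{\mu}}$ as its argument vanishes, whence $F_{\gamma}(\gamma)\simeq\frac{\mu^{\mu}m^{m}(1+\kappa)^{\mu}}{\Gamma(\mu+1)(\mu\kappa+m)^{m}}\frac{\gamma^{\mu}}{\overline{\gamma}^{\mu}}$; this is immediate from the hypergeometric representation of~\cite{paris} by letting the exponential and the ${}_1F_1$ tend to one, or, working instead from the finite gamma mixture~\eqref{eq:cdfV1}--\eqref{eq:cdfV2} adopted here, by checking that in the signed sum the monomials $\gamma^{0},\dots,\gamma^{\mu-1}$ cancel and the first surviving term is $\gamma^{\mu}$. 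Substituting the aggregated Bob parameters into $\widetilde{F}$ and raising to the $N_{\mathrm{A}}$-th power, the leading term of $F_{\gamma_{\mathrm{B}}}$ is proportional to $\gamma^{N_{\mathrm{A}}N_{\mathrm{B}}\mu_{\mathrm{B}}}\,\overline{\gamma}_{\mathrm{B}}^{-N_{\mathrm{A}}N_{\mathrm{B}}\mu_{\mathrm{B}}}$; this already pins down the secrecy diversity order $\mathrm{G}_{d}=N_{\mathrm{A}}N_{\mathrm{B}}\mu_{\mathrm{B}}$, which does not depend on $m_{\mathrm{B}}$, so the two cases $m_{\mathrm{B}}<\mu_{\mathrm{B}}$ and $m_{\mathrm{B}}\geq\mu_{\mathrm{B}}$ collapse into a single asymptote.

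Next I would insert this dominant term into~\eqref{eq18}. Because $\tau\gamma_{\mathrm{E}}/\overline{\gamma}_{\mathrm{B}}\to0$ pointwise and $F_{\gamma_{\mathrm{B}}}\leq1$, dominated convergence allows interchanging the limit with the integral; the surviving $\gamma_{\mathrm{E}}$-dependence is $(\tau\gamma_{\mathrm{E}})^{N_{\mathrm{A}}N_{\mathrm{B}}\mu_{\mathrm{B}}}$, so the integral reduces to the $N_{\mathrm{A}}N_{\mathrm{B}}\mu_{\mathrm{B}}$-th moment of the aggregated-Eve $\kappa$-$\mu$ shadowed variate, which is finite since the $\kappa$-$\mu$ shadowed law has moments of every order. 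Plugging in the closed-form moment $\mathbb{E}[\gamma^{n}]=\frac{\Gamma(\mu+n)}{\Gamma(\mu)}\left(\frac{\overline{\gamma}}{\mu(1+\kappa)}\right)^{n}\left(\frac{m}{\mu\kappa+m}\right)^{m}{}_2F_1\left(m,\mu+n;\mu;\frac{\mu\kappa}{\mu\kappa+m}\right)$ --- equivalently the finite sum of gamma moments induced by~\eqref{eq10}--\eqref{eq11} --- with the aggregated-Eve parameters and $n=N_{\mathrm{A}}N_{\mathrm{B}}\mu_{\mathrm{B}}$, and collecting all factors free of $\overline{\gamma}_{\mathrm{B}}$ into $\mathrm{G}_{c}$, delivers the claimed $\mathrm{SOP}^{\infty}\approx\mathrm{G}_{c}\overline{\gamma}_{\mathrm{B}}^{-\mathrm{G}_{d}}$ of~\eqref{eq:SOPAsin}.

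The main obstacle is the first step when it is carried out from the finite gamma mixture~\eqref{eq:cdfV1}--\eqref{eq:cdfV2}: since the coefficients $A_{1,j},A_{2,j}$ alternate in sign, the low-order powers of $\gamma$ do not vanish term by term, and establishing their exact cancellation up to order $\gamma^{\mu-1}$ --- so that the true diversity order $\mu$ surfaces --- needs a careful combinatorial argument, which is bypassed if one simply invokes the hypergeometric form of~\cite{paris}. A secondary, purely bookkeeping, difficulty is propagating the multiplicative constants through the $N_{\mathrm{A}}$-th power and the moment formula so that they match the exact coefficient displayed in~\eqref{eq:SOPAsin}.
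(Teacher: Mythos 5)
Your proposal is correct and follows essentially the same route as the paper's Appendix C: extract the leading small-argument behaviour $F_{\gamma_1}(\gamma)\simeq \mathrm{const}\cdot\gamma^{N_\mathrm{B}\mu_\mathrm{B}}\overline{\gamma}_\mathrm{B}^{-N_\mathrm{B}\mu_\mathrm{B}}$ of the aggregated Bob CDF, raise it to the $N_\mathrm{A}$-th power via~\eqref{eq1ApeA}, and integrate against Eve's PDF in~\eqref{eq18}, which is precisely the $N_\mathrm{A}N_\mathrm{B}\mu_\mathrm{B}$-th moment of the aggregated Eve variate and produces the ${}_2F_1$ in~\eqref{eq:SOPAsin}. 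The only cosmetic differences are that the paper justifies the near-origin coefficient by the asymptotic-matching gamma approximation of~\cite{perim} combined with $\gamma(a,x)\simeq x^a/a$ (so the sign-cancellation issue you flag in the finite mixture never arises, just as in your hypergeometric bypass), and it evaluates the final integral explicitly with~\cite[Eq.~(7.522.9)]{Gradshteyn} rather than quoting the moment formula --- the same computation, and your constants match~\eqref{apenAsin:3} and~\eqref{eq:SOPAsin} exactly.
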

\begin{proof}
See Appendix~\ref{ap:SOPAsintotas}.
\end{proof}

\begin{remark}\label{remark1}
From~\eqref{eq:SOPAsin}, it can be noticed that the secrecy diversity gain is given by $\mathrm{G}_d=N_\mathrm{A}N_\mathrm{B}\mu_\mathrm{B}$. In other words, the secrecy diversity gain is directly affected by the number of antennas (i.e., $N_\mathrm{A}$ and/or $N_\mathrm{B}$) or the number of wave clusters arriving at Bob. Interestingly, neither the LOS condition through $\kappa_{\rm B}$ nor the LOS fluctuation through $m_{\rm B}$ affect the secrecy diversity order. This fact plays a pivotal role in the secrecy performance of the system (as will be discussed in Section~\ref{sect:numericals}). On the other hand, notice that the fading parameter $\mu_\mathrm{E}$ corresponding to the eavesdropper channel does not affect the secrecy diversity gain of the underlying system (see Fig.~\ref{fig4Sop}).
\end{remark}

%%%%%%%%%%%%%%%%%%%%%%%%%%%%%%%%%%%%%%%%%%%%%%%%  SOP EXACT Version1
\begin{figure*}[ht]
	\hrulefill
	\begin{normalsize}
\begin{align}\label{eq:SOPExactV1}
\text{SOP}
 = & \sum_{k=0}^{N_\mathrm{A}}(-1)^k\binom{N_\mathrm{A}}{k} \sum_{c=0}^{k}\binom{k}{c} \sum_{\rho\left (k-c,\eta_\mathrm{B}  \right ) }^{ } \frac{(k-c)!}{s_1!\cdots s_{\eta_\mathrm{B}}!}   \left [ \prod_{t=1}^{\eta_\mathrm{B}} \left (  \frac{\left ( \frac{1 }{\Delta_1^\mathrm{B}} \right )^{\eta_\mathrm{B}-t} }{(\eta_\mathrm{B}-t)!}   \sum_{z=\eta_\mathrm{B}+1-t}^{\eta_\mathrm{B}} A_{1,\eta_\mathrm{B}+1-z}^\mathrm{B}   \right )^{s_t}\right] \sum_{\rho\left (c,\nu_\mathrm{B}  \right ) }^{ } \frac{c!}{p_1!\cdots p_{\nu_\mathrm{B}}!} \nonumber  \\ & \times  \left[ \prod_{q=1}^{\nu_\mathrm{B}}\left(   \frac{\left ( \tfrac{1 }{\Delta_2^\mathrm{B}} \right )^{\nu_\mathrm{B}-q}  }{(\nu_\mathrm{B}-q)!}   \sum_{z=\nu_\mathrm{B}+1-q}^{\nu_\mathrm{B}} A_{2,\nu_\mathrm{B}+1-z}^\mathrm{B} \right)^{p_q}\right]  \exp\left ( - \left ( \tau-1  \right )   \left ( \frac{k-c}{\Delta_1^\mathrm{B}} +  \frac{c}{\Delta_2^\mathrm{B}} \right )\right )  \sum_{b=0}^{\sum_{t=1 }^{\eta_\mathrm{B}}(\eta_\mathrm{B}-t)s_t+\sum_{q=1 }^{\nu_\mathrm{B} }(\nu_\mathrm{B}-q)p_q}   \nonumber \\& \times  \binom{\sum_{t=1 }^{\eta_\mathrm{B}}(\eta_\mathrm{B}-t)s_t+\sum_{q=1 }^{\nu_\mathrm{B} }(\nu_\mathrm{B}-q)p_q}{b} \left ( \tau-1 \right )^{\sum_{t=1 }^{\eta_\mathrm{B}}(\eta_\mathrm{B}-t)s_t+\sum_{q=1 }^{\nu_\mathrm{B} }(\nu_\mathrm{B}-q)p_q-b} \left ( \tau\right )^{b} \Biggr[ \sum_{j=1}^{\eta_\mathrm{E}} \left ( \frac{\eta_\mathrm{E}-j+1}{\omega_{A1}^{\mathrm{E}}} \right )^{\eta_\mathrm{E}-j+1}\nonumber \\ & \times  \frac{A_{1,j}^{\mathrm{E}}}{\left (\eta_\mathrm{E}-j  \right )!}  \left ( \frac{\tau\left ( k-c \right )}{\Delta_1^{\mathrm{B}}}+\frac{\tau c}{\Delta_2^{\mathrm{B}}} +\frac{\eta_\mathrm{E}-j+1}{\omega_{A1}^{\mathrm{E}} }\right )^{-1-b+j-\eta_\mathrm{E}}  \Gamma\left (1+b-j+\eta_\mathrm{E} \right )+\sum_{j=1}^{\nu_\mathrm{E}} \frac{A_{2,j}^{\mathrm{E}}}{\left ( \nu_\mathrm{E}-j \right )!} \left ( \frac{\nu_\mathrm{E}-j+1}{\omega_{A2}^{\mathrm{E}}} \right )^{\nu_\mathrm{E}-j+1} \nonumber \\ & \times \left ( \frac{\tau\left ( k-c \right )}{\Delta_1^{\mathrm{B}}}+\frac{\tau c}{\Delta_2^{\mathrm{B}}} +\frac{\nu_\mathrm{E}-j+1}{\omega_{A1}^{\mathrm{E}} }\right )^{-1-b+j-\nu_\mathrm{E}}  \Gamma\left (1+b-j+\nu_\mathrm{E} \right ) \Biggr] .
 \end{align}
	\end{normalsize}
%	\hrulefill
	%\vspace{-5mm}
\end{figure*}

%%%%%%%%%%%%%%%%%%%%%%%%%%%%%%%%%%%%%%%%%%%%%%%%%%%%%%
%%%%%%%%%%%%%%%%%%%%%%%%%%%%%%%%%%%%%%%%%%%%%%%%  SOP EXACT Version2
\begin{figure*}[ht]
	%\hrulefill
	\begin{normalsize}
\begin{align}\label{eq:SOPExactV2}
\text{SOP}
 = & \sum_{k=0}^{N_\mathrm{A}}(-1)^k\binom{N_\mathrm{A}}{k}  \exp\left ( -k \frac{  \left (\tau-1   \right )}{\Delta_2^\mathrm{B}} \right )     \sum_{\rho\left (k,\nu_\mathrm{B}  \right ) }^{ } \frac{k!}{s_1!\cdots s_{\nu_\mathrm{B}}!} \left[ \prod_{t=1}^{\nu_\mathrm{B}} \left( \frac{\left ( \tfrac{1 }{\Delta_2^\mathrm{B}} \right )^{\nu_\mathrm{B}-t} }{(\nu_\mathrm{B}-t)!} \sum_{z=\beta_\mathrm{B}+1-\mathcal{T}(j-1)}^{\beta_\mathrm{B}}B_{\beta_\mathrm{B}-z}^\mathrm{B}\right)^{s_t}\right]   \sum_{b=0}^{\sum_{t=1 }^{\nu_\mathrm{B}}(\nu_\mathrm{B}-t)s_t} \tau^b \nonumber  \\ & \times \binom{\sum_{t=1 }^{\nu_\mathrm{B}}(\nu_\mathrm{B}-t)s_t}{b} \left ( \tau-1 \right )^{\sum_{t=1 }^{\nu_\mathrm{B}}(\nu_\mathrm{B}-t)s_t}   \sum_{j=0}^{\beta_\mathrm{E}}\frac{B_{j}^\mathrm{E} }{\nu_\mathrm{E}-j-1}  \left ( \frac{\nu_\mathrm{E}-j}{\omega_{B}^{\mathrm{E}} } \right )^{\nu_\mathrm{E}-j}   \left (\frac{k \tau}{\Delta_2^\mathrm{B}}+ \frac{\nu_\mathrm{E}-j}{\omega_{B}^{\mathrm{E}}} \right )^{j-b-\nu_\mathrm{E}}\Gamma\left ( b-j+ \nu_\mathrm{E}\right ).
 \end{align}
	\end{normalsize}
	%\hrulefill
	%\vspace{-5mm}
\end{figure*}
%%%%%%%%%%%%%%%%%%%%%%%%%
%%%%%%%%%%%%%%%%%%%%%%%%%%%%%%%%%%%%%%%%%%%%%%%%%%  Asintota 
\begin{figure*}[ht!]
	%\hrulefill
	\begin{normalsize}
\begin{align}\label{eq:SOPAsin}
\mathrm{SOP}^{\infty}\simeq 
  &  \left ( \frac{m_\mathrm{B}^{N_\mathrm{B}m_\mathrm{B}} \left ( 1+\kappa_\mathrm{B} \right )^{N_\mathrm{B}\mu_\mathrm{B} } \mu_\mathrm{B}^{N_\mathrm{B}\mu_\mathrm{B}-1} \tau^{N_\mathrm{B}\mu_\mathrm{B} } }{N_\mathrm{B} \overline{\gamma}_\mathrm{B}^{N_\mathrm{B}\mu_\mathrm{B}} \left ( m_\mathrm{B}+\kappa_\mathrm{B}\mu_\mathrm{B} \right )^{N_\mathrm{B}m_\mathrm{B}}\Gamma\left ( N_\mathrm{B} \mu_\mathrm{B} \right ) } \right )^{N_\mathrm{A}} 
  \frac{  m_\mathrm{E} ^{N_\mathrm{E}m_\mathrm{E}}  }{\Gamma\left ( N_\mathrm{E} \mu_\mathrm{E} \right )  \left ( \mu_\mathrm{E} \kappa_\mathrm{E} +m_\mathrm{E}\right )^{N_\mathrm{E} m_\mathrm{E} }} \left (  \frac{\mu_\mathrm{E}\left ( 1+\kappa_\mathrm{E} \right )}{\overline{\gamma}_\mathrm{E}} \right )^{-N_\mathrm{A}N_\mathrm{B}\mu_\mathrm{B}}\nonumber  \\ & \times  \Gamma\left (N_\mathrm{A}N_\mathrm{B}\mu_\mathrm{B}+N_\mathrm{E}\mu_\mathrm{E}  \right )    { }_2F_1\left ( N_\mathrm{E} m_\mathrm{E},N_\mathrm{A}N_\mathrm{B}\mu_\mathrm{B}+N_\mathrm{E}\mu_\mathrm{E},N_\mathrm{E}\mu_\mathrm{E},\frac{\kappa_\mathrm{E} \mu_\mathrm{E}}{m_\mathrm{E}+\kappa_\mathrm{E} \mu_\mathrm{E}}  \right ).
 \end{align}
	\end{normalsize}
	\hrulefill
%	\vspace{-5mm}
\end{figure*}
%%%%%%%%%%%%%%%%%%%%%%%%%%%%%%%%

%%%%%%%%%%%%%%%%%%%%%%%%%%%%%%%%
%AVERAGE SECRECY CAPACITY
\section{Average Secrecy Capacity} \label{sect:ASC}  
%{\color{red}Las expresiones de ASC fueron colocadas antes de la seccion iv.A, fue por causa de ajuste de espacio? Porque ellas aparecen antes de ser citadas, eso es un poco confuso}
In this section, we consider the active eavesdropping scenario, where the CSIs of both main and eavesdropper channels are known at Alice. Unlike the
passive eavesdropping scenario, Alice can now adapt her transmission rate according to any achievable secrecy rate $R_\mathrm{S}$ such that $R_\mathrm{S}\leq C_{\mathrm{S}}$. Then, the maximum achievable secrecy rate occurs when $R_\mathrm{S}= C_{\mathrm{S}}$. Since the CSI of the eavesdropper channel is available at Alice, the average secrecy capacity is an essential performance metric to assess the secrecy performance. 

\subsection{Exact ASC}
According to~\cite{Barros}, the ASC, $\overline{C}_\mathrm{S}$, is defined as the average of the secrecy rate over the instantaneous SNR of the main and eavesdropper channels. For convenience, we adopt the formulation of $\overline{C}_\mathrm{S}$ introduced in ~\cite[Proposition 3]{Moualeu}
 \begin{align}\label{eq22}
\overline{C}_\mathrm{S}=\overline{C}_\mathrm{B}-\mathcal{L}\left ( \overline{\gamma}_\mathrm{B}, \overline{\gamma}_\mathrm{E}\right ), 
\end{align}
where $\overline{C}_\mathrm{B}$ is the average capacity of the main link in the absence of an eavesdropper, given by
\begin{align}\label{eq23}
\overline{C}_\mathrm{B}=\frac{1}{\ln 2}\int_{0}^{\infty}\frac{1-F_{\gamma_{\mathrm{B}}}(\gamma_\mathrm{E})}{1+\gamma_\mathrm{E}}d\gamma_\mathrm{E}, 
\end{align}
and $\mathcal{L}\left ( \overline{\gamma}_\mathrm{B}, \overline{\gamma}_\mathrm{E}\right )$ can be interpreted as an ASC loss, defined as
\begin{align}\label{eq24}
\mathcal{L}\left ( \overline{\gamma}_\mathrm{B}, \overline{\gamma}_\mathrm{E}\right )=\frac{1}{\ln 2}\int_{0}^{\infty}\frac{\overline{F}_{\gamma_{\mathrm{E}}}(\gamma_\mathrm{E})\overline{F}_{\gamma_{\mathrm{B}}}(\gamma_\mathrm{E})}{1+\gamma_\mathrm{E}}d\gamma_\mathrm{E}\geq 0, 
\end{align}
in which $\overline{F}_{\gamma_{\mathrm{B}}}$ and $\overline{F}_{\gamma_{\mathrm{E}}}$ denote the complementary CDF (CCDF) of the RVs $\gamma_{\mathrm{B}}$, and $\gamma_{\mathrm{E}}$, respectively. Then, the ASC expressions over i.i.d. $\kappa$-$\mu$ shadowed fading channels in a TAS/MRC system are given as stated in the following Proposition.

\begin{prop}\label{Propo5}
The ASC closed-form expressions
for $ m_i\geq \mu_i$ and $m_i<\mu_i$ with $i \in \left \{ \mathrm{B},\mathrm{E} \right \}$ over i.i.d. $\kappa$-$\mu$ shadowed fading channels can be
formulated as~\eqref{ascV2}, and~\eqref{ascV1}, at the top of the next page, respectively.
\end{prop}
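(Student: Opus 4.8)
The plan is to reduce both building blocks of~\eqref{eq22} --- the average main-link capacity $\overline{C}_\mathrm{B}$ of~\eqref{eq23} and the secrecy loss $\mathcal{L}(\overline{\gamma}_\mathrm{B},\overline{\gamma}_\mathrm{E})$ of~\eqref{eq24} --- to finite sums of a single tabulated integral. The starting point is that, by Proposition~\ref{Propo1}, the complementary CDF $\overline{F}_{\gamma_\mathrm{B}}(\gamma)=1-F_{\gamma_\mathrm{B}}(\gamma)$ read off from~\eqref{eq12}--\eqref{eq13} is a \emph{finite} linear combination of elementary terms $\gamma^{n}\exp(-a\gamma)$: for $m_\mathrm{B}<\mu_\mathrm{B}$ one has $n=\sum_{t=1}^{\eta_\mathrm{B}}(\eta_\mathrm{B}-t)s_t+\sum_{q=1}^{\nu_\mathrm{B}}(\nu_\mathrm{B}-q)p_q$ and $a=\tfrac{k-c}{\Delta_1^\mathrm{B}}+\tfrac{c}{\Delta_2^\mathrm{B}}$, whereas for $m_\mathrm{B}\geq\mu_\mathrm{B}$ one has $n=\sum_{t=1}^{\nu_\mathrm{B}}(\nu_\mathrm{B}-t)s_t$ and $a=\tfrac{k}{\Delta_2^\mathrm{B}}$; in both cases $n$ is a non-negative integer and $a>0$ for every admissible index tuple, since $k\geq1$ and $\Delta_1^\mathrm{B},\Delta_2^\mathrm{B}>0$ by~\eqref{eq6}. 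Likewise, $\overline{F}_{\gamma_\mathrm{E}}(\gamma)$ obtained from~\eqref{eq:cdfV1Eve}--\eqref{eq:cdfV2Eve} is a finite sum of terms $\gamma^{r}\exp(-b\gamma)$ with $r\in\mathbb{N}$ and $b\in\{1/\Delta_1^\mathrm{E},1/\Delta_2^\mathrm{E}\}>0$.

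The only integral required is, for $n\in\mathbb{N}$ and $s>0$,
\begin{equation}\label{eq:keyint}
\int_{0}^{\infty}\frac{\gamma^{n}e^{-s\gamma}}{1+\gamma}\,d\gamma=\sum_{\ell=0}^{n-1}(-1)^{\ell}\frac{(n-1-\ell)!}{s^{\,n-\ell}}+(-1)^{n}e^{s}\,\Gamma(0,s),
\end{equation}
which follows from the polynomial division $\gamma^{n}/(1+\gamma)=\sum_{\ell=0}^{n-1}(-1)^{\ell}\gamma^{\,n-1-\ell}+(-1)^{n}/(1+\gamma)$, term-by-term integration of the resulting gamma integrals, and $\int_{0}^{\infty}e^{-s\gamma}/(1+\gamma)\,d\gamma=e^{s}\Gamma(0,s)=-e^{s}\mathrm{Ei}(-s)$~\cite{Gradshteyn} (for $n=0$ the finite sum is empty). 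Equation~\eqref{eq:keyint} is a genuine closed form: it contains neither Meijer-$G$ nor Fox-$H$ functions, only the exponential integral.

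The rest is mechanical. For $\overline{C}_\mathrm{B}$: insert the finite-sum expansion of $\overline{F}_{\gamma_\mathrm{B}}$ into~\eqref{eq23}, swap the finite summations with the integral, and invoke~\eqref{eq:keyint} with $s$ equal to the Bob rate $a$ above. For $\mathcal{L}$: the integrand of~\eqref{eq24} is $\overline{F}_{\gamma_\mathrm{E}}(\gamma)\,\overline{F}_{\gamma_\mathrm{B}}(\gamma)/(1+\gamma)$, so multiplying the two finite sums produces another finite sum whose generic term is a constant times $\gamma^{\,r+n}\exp\!\big(-(a+b)\gamma\big)$ --- the monomial degrees add and the exponential rates add --- and~\eqref{eq:keyint} applies verbatim with $(n,s)\mapsto(r+n,\,a+b)$, with $a+b>0$ ensuring validity. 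Substituting both results into $\overline{C}_\mathrm{S}=\overline{C}_\mathrm{B}-\mathcal{L}$ and writing out every index set explicitly --- the multinomial partitions $\rho(\cdot,\cdot)$ of Proposition~\ref{Propo1}, the binomial sums over $k$ and $c$, and the Eve summation index $j$ from~\eqref{eq10}--\eqref{eq11} --- yields~\eqref{ascV2} in the regime $m_i\geq\mu_i$ and~\eqref{ascV1} in the regime $m_i<\mu_i$, the different combinations of the Bob and Eve fading regimes being absorbed by the corresponding branches of the channel statistics.

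The analytical content is thus minimal; the real work is bookkeeping. The hard part will be keeping the multi-index notation under control through the multinomial expansion of $F_{\gamma_\mathrm{B}}$, correctly tracking how monomial degrees and exponential rates combine when $\overline{F}_{\gamma_\mathrm{E}}$ and $\overline{F}_{\gamma_\mathrm{B}}$ are multiplied inside $\mathcal{L}$, treating the four cases $m_\mathrm{B}\lessgtr\mu_\mathrm{B}$, $m_\mathrm{E}\lessgtr\mu_\mathrm{E}$ separately, and verifying that every rate that ends up as the argument of an $\mathrm{Ei}(\cdot)$ / $\Gamma(0,\cdot)$ is strictly positive so that~\eqref{eq:keyint} is legitimate. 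The one genuinely delicate spot is the $n=0$ boundary case of~\eqref{eq:keyint}, where the polynomial sum vanishes and only the $e^{s}\Gamma(0,s)$ term remains.
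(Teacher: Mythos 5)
Your proposal is correct and follows essentially the same route as the paper: expand the CCDFs of $\gamma_\mathrm{B}$ and $\gamma_\mathrm{E}$ as finite sums of terms $\gamma^{n}e^{-a\gamma}$, multiply them inside $\mathcal{L}(\overline{\gamma}_\mathrm{B},\overline{\gamma}_\mathrm{E})$, swap summation and integration, and reduce everything to the single integral $\int_{0}^{\infty}\gamma^{n}e^{-s\gamma}/(1+\gamma)\,d\gamma$, which the paper evaluates by citing~\cite[Eq.~(3.353.5)]{Gradshteyn} rather than deriving it by polynomial division. Your closed form $\sum_{\ell=0}^{n-1}(-1)^{\ell}(n-1-\ell)!\,s^{\ell-n}+(-1)^{n}e^{s}\Gamma(0,s)$ is equivalent, via the recurrence of the upper incomplete gamma function, to the compact $e^{s}\,\Gamma(n+1)\,\Gamma(-n,s)$ appearing in~\eqref{ascV1} and~\eqref{ascV2}, so the two presentations differ only in how the tabulated integral is written.
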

\begin{proof}
See Appendix~\ref{ap:ASC}.
\end{proof}

%%%%%%%%%%%%%%%%%%%%%%%%%
%%%%%%%%%%%%%%%%%%%%%%%%%%%%%%%%%%%%%%%%%%%%%%%%%%  ASC version %%%%V1 
\begin{figure*}[ht!]
	\hrulefill
	\begin{normalsize}
\begin{align}\label{ascV2}
\overline{C}_\mathrm{S}=  
  & \frac{1}{\ln 2} \sum_{k=1}^{N_\mathrm{A}}(-1)^{k+1}\binom{N_\mathrm{A}}{k} \sum_{\rho\left (k,\nu_\mathrm{B}  \right ) }^{ } \frac{k!}{s_1!\cdots s_{\nu_\mathrm{B}}!}  \left[ \prod_{t=1}^{\nu_\mathrm{B}} \left( \frac{\left ( \tfrac{1 }{\Delta_2^\mathrm{B}} \right )^{\nu_\mathrm{B}-t} }{(\nu_\mathrm{B}-t)!} \sum_{z=\beta_\mathrm{B}+1-\mathcal{T}(j-1)}^{\beta_\mathrm{B}}B_{\beta_\mathrm{B}-z}^\mathrm{B}\right)^{s_t}\right]  \exp\left (  \frac{k}{\Delta_2^\mathrm{B}} \right )  \nonumber  \\ & \times \Gamma\left ( 1+\sum_{t=1 }^{\nu_\mathrm{B}}(\nu_\mathrm{B}-t)s_t \right ) \Gamma\left (-\sum_{t=1 }^{\nu_\mathrm{B}}(\nu_\mathrm{B}-t)s_t,\frac{k}{\Delta_2^\mathrm{B}} \right )-
    \frac{1}{\ln 2} \sum_{k=1}^{N_\mathrm{A}}(-1)^{k+1}\binom{N_\mathrm{A}}{k} \sum_{\rho\left (k,\nu_\mathrm{B}  \right ) }^{ } \frac{k!}{s_1!\cdots s_{\nu_\mathrm{B}}!} \nonumber \\ & \times \left[ \prod_{t=1}^{\nu_\mathrm{B}} \left( \frac{\left ( \tfrac{1 }{\Delta_2^\mathrm{B}} \right )^{\nu_\mathrm{B}-t} }{(\nu_\mathrm{B}-t)!} \sum_{z=\beta_\mathrm{B}+1-\mathcal{T}(j-1)}^{\beta_\mathrm{B}}B_{\beta_\mathrm{B}-z}^\mathrm{B}\right)^{s_t}\right] \sum_{j=0}^{\beta_\mathrm{E}}B_{j}^\mathrm{E}\sum_{r=0}^{\nu_\mathrm{E}-j-1}\frac{1}{r!}\left ( \frac{1}{\Delta_2^\mathrm{E}} \right )^r \exp\left (  \frac{k}{\Delta_2^\mathrm{B}}+\frac{1 }{\Delta_2^\mathrm{E}}\right )   \nonumber \\ & \times \Gamma\left ( 1+r+\sum_{t=1 }^{\nu_\mathrm{B}}(\nu_\mathrm{B}-t)s_t \right ) \Gamma\left (-r-\sum_{t=1 }^{\nu_\mathrm{B}}(\nu_\mathrm{B}-t)s_t,\frac{k}{\Delta_2^\mathrm{B}} +\frac{1}{\Delta_2^\mathrm{E}}\right ).
 \end{align}
	\end{normalsize}
%	\hrulefill
%	\vspace{-5mm}
\end{figure*}
%%%%%%%%%%%%%%%%%%%%%%%%%%%%%%%%

%%%%%%%%%%%%%%%%%%%%%%%%%
%%%%%%%%%%%%%%%%%%%%%%%%%%%%%%%%%%%%%%%%%%%%%%%%%%  ASC version %%%%V2 
\begin{figure*}[ht!]
     % \hrulefill
	\begin{normalsize}
\begin{align}\label{ascV1}
\overline{C}_\mathrm{S}=  
  &  \frac{1}{\ln 2} \sum_{k=1}^{N_\mathrm{A}}(-1)^{k+1}\binom{N_\mathrm{A}}{k} \sum_{c=0}^{k}\binom{k}{c} \sum_{\rho\left (c,\nu_\mathrm{B}  \right ) }^{ } \frac{c!}{p_1!\cdots p_{\nu_\mathrm{B}}!}  \left[ \prod_{q=1}^{\nu_\mathrm{B}}\left(   \frac{\left ( \tfrac{1 }{\Delta_2^\mathrm{B}} \right )^{\nu_\mathrm{B}-q}  }{(\nu_\mathrm{B}-q)!} \sum_{z=\nu_\mathrm{B}+1-q}^{\nu_\mathrm{B}} A_{2,\nu_\mathrm{B}+1-z}^\mathrm{B} \right)^{p_q}\right] \sum_{\rho\left (k-c,\eta_\mathrm{B}  \right ) }^{ } \frac{(k-c)!}{s_1!\cdots s_{\eta_\mathrm{B}}!}
\nonumber \\ & \times  \left [ \prod_{t=1}^{\eta_\mathrm{B}}  \left (  \frac{\left ( \frac{1 }{\Delta_1^\mathrm{B}} \right )^{\eta_\mathrm{B}-t} }{(\eta_\mathrm{B}-t)!}      \sum_{z=\eta_\mathrm{B}+1-t}^{\eta_\mathrm{B}} A_{1,\eta_\mathrm{B}+1-z}^\mathrm{B}   \right )^{s_t}\right] \exp\left (\frac{ k-c}{\Delta_1^\mathrm{B}}+\frac{ c}{\Delta_2^\mathrm{B}}  \right ) \Gamma\left ( 1+\sum_{t=1 }^{\eta_\mathrm{B}}(\eta_\mathrm{B}-t)s_t+\sum_{q=1 }^{\nu_\mathrm{B} }(\nu_\mathrm{B}-q)p_q \right ) \nonumber \\ & \times \Gamma\left ( -\sum_{t=1 }^{\eta_\mathrm{B}}(\eta_\mathrm{B}-t)s_t-\sum_{q=1 }^{\nu_\mathrm{B} }(\nu_\mathrm{B}-q)p_q,\frac{\Delta_2^\mathrm{B}\left ( k-c \right )+\Delta_1^\mathrm{B}c}{\Delta_1^\mathrm{B}\Delta_2^\mathrm{B}} \right )-\frac{1}{\ln 2} \sum_{k=1}^{N_\mathrm{A}}(-1)^{k+1}\binom{N_\mathrm{A}}{k} \sum_{c=0}^{k}\binom{k}{c} \sum_{\rho\left (c,\nu_\mathrm{B}  \right ) }^{ } \frac{c!}{p_1!\cdots p_{\nu_\mathrm{B}}!} \nonumber \\ & \times \left[ \prod_{q=1}^{\nu_\mathrm{B}}\left(   \frac{\left ( \tfrac{1 }{\Delta_2^\mathrm{B}} \right )^{\nu_\mathrm{B}-q}  }{(\nu_\mathrm{B}-q)!} \sum_{z=\nu_\mathrm{B}+1-q}^{\nu_\mathrm{B}} A_{2,\nu_\mathrm{B}+1-z}^\mathrm{B} \right)^{p_q}\right]
 \sum_{\rho\left (k-c,\eta_\mathrm{B}  \right ) }^{ } \frac{(k-c)!}{s_1!\cdots s_{\eta_\mathrm{B}}!}\left [ \prod_{t=1}^{\eta_\mathrm{B}} \left (  \frac{\left ( \frac{1 }{\Delta_1^\mathrm{B}} \right )^{\eta_\mathrm{B}-t} }{(\eta_\mathrm{B}-t)!}     \sum_{z=\eta_\mathrm{B}+1-t}^{\eta_\mathrm{B}} A_{1,\eta_\mathrm{B}+1-z}^\mathrm{B}   \right )^{s_t}\right]    \nonumber \\ & \times  \exp\left (\tfrac{k-c}{\Delta_1^\mathrm{B}}+\tfrac{c}{\Delta_2^\mathrm{B}} \right )  \left( \sum_{j=1}^{\eta_\mathrm{E}}A_{1,j }^{\mathrm{E}} \sum_{r=0}^{\eta_\mathrm{E}-j}\frac{1}{r!}  \left ( \frac{1}{\Delta_1^{\mathrm{E}}} \right )^{r} \exp\left (\frac{1}{\Delta_1^{\mathrm{E}}} \right ) \Gamma\left (-r-\sum_{t=1 }^{\eta_\mathrm{B}}(\eta_\mathrm{B}-t)s_t-\sum_{q=1 }^{\nu_\mathrm{B} }(\nu_\mathrm{B}-q)p_q, \tfrac{k-c}{\Delta_1^\mathrm{B}}+\tfrac{c}{\Delta_2^\mathrm{B}} +\tfrac{1}{\Delta_1^\mathrm{E}}     \right )  \right.   \nonumber \\ & \times\left.   \Gamma\left ( 1+r+\sum_{t=1 }^{\eta_\mathrm{B}}(\eta_\mathrm{B}-t)s_t+\sum_{q=1 }^{\nu_\mathrm{B} }(\nu_\mathrm{B}-q)p_q \right ) + \sum_{j=1}^{\nu_\mathrm{E}}A_{2,j}^{\mathrm{E}}  \sum_{r=0}^{\nu_\mathrm{E}-j}\frac{1}{r!}\left ( \frac{1}{\Delta_2^{\mathrm{E}}} \right )^{r}   \Gamma\left ( 1+r+\sum_{t=1 }^{\eta_\mathrm{B}}(\eta_\mathrm{B}-t)s_t+\sum_{q=1 }^{\nu_\mathrm{B} }(\nu_\mathrm{B}-q)p_q \right ) \right.   \nonumber \\ & \times\left.  \exp\left ( \frac{1}{\Delta_2^\mathrm{E}}   \right )\Gamma\left (-r-\sum_{t=1 }^{\eta_\mathrm{B}}(\eta_\mathrm{B}-t)s_t-\sum_{q=1 }^{\nu_\mathrm{B} }(\nu_\mathrm{B}-q)p_q, \tfrac{k-c}{\Delta_1^\mathrm{B}}+\tfrac{c}{\Delta_2^\mathrm{B}} +\tfrac{1}{\Delta_1^\mathrm{E}}     \right ) \right).
 \end{align}
	\end{normalsize}
	\hrulefill
%	\vspace{-5mm}
\end{figure*}
%%%%%%%%%%%%%%%%%%%%%%%%%%%%%%%%

\subsection{Asymptotic ASC}
In this section, we derive a closed-form asymptotic ASC expression to assess the system performance in the high-SNR regime. Herein, as in the asymptotic SOP analysis, we consider that $\overline{\gamma}_\mathrm{B}$ goes to infinity, while $\overline{\gamma}_\mathrm{E}$ is kept unchanged. Based on this, the asymptotic expression of the ASC can be expressed as~\cite{Moualeu}
 \begin{align}\label{eq25}
\overline{C}_\mathrm{S}^{\infty}\simeq \overline{C}_\mathrm{B}^{\overline{\gamma}_\mathrm{B}\rightarrow{\infty}}-\overline{C}_\mathrm{E}, 
\end{align}
where the average
capacity of the eavesdropper channel, $\overline{C}_\mathrm{E}$, is given by~\cite{Moualeu}   

\begin{align}\label{eq28}
\overline{C}_\mathrm{E}=\frac{1}{\ln 2}\int_{0}^{\infty}\frac{1-F_{\gamma_{\mathrm{E}}}(\gamma_\mathrm{E})}{1+\gamma_\mathrm{E}}d\gamma_\mathrm{E}, 
\end{align}

\begin{prop}\label{Propo6}
The asymptotic expressions of ASC for  $m_i<\mu_i$ and $m_i\geq \mu_i$ with $i \in \left \{ \mathrm{B},\mathrm{E} \right \}$ over i.i.d. $\kappa$-$\mu$ shadowed fading channels are given in~\eqref{ascAsympV1} and~\eqref{ascAsympV2}, at the top of the next page, respectively. In these expressions, $\mathcal{U}(u)$ and $\mathcal{W}(w)$ are obtained from~\eqref{val2}. 
\end{prop}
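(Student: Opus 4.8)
The plan is to obtain the asymptote from the decomposition \eqref{eq25}, i.e.\ $\overline{C}_\mathrm{S}^{\infty}\simeq\overline{C}_\mathrm{B}^{\overline{\gamma}_\mathrm{B}\to\infty}-\overline{C}_\mathrm{E}$, and to evaluate the two pieces separately. For the first piece I would not work with \eqref{eq23} directly, since the integrand $(1-F_{\gamma_\mathrm{B}})/(1+\gamma)$ is not pointwise small as $\overline{\gamma}_\mathrm{B}\to\infty$; instead I would integrate by parts to recover $\overline{C}_\mathrm{B}=\tfrac{1}{\ln 2}\,\mathbb{E}[\ln(1+\gamma_\mathrm{B})]$ and then pass to the limit via the squeeze $0\le\ln(1+\gamma_\mathrm{B})-\ln\gamma_\mathrm{B}\le 1/\gamma_\mathrm{B}$ together with $\mathbb{E}[1/\gamma_\mathrm{B}]\to 0$ (each gamma component of $\gamma_\mathrm{B}$ in Proposition~\ref{Propo2} has its mean and reciprocal moment scaling as $\overline{\gamma}_\mathrm{B}$ and $\overline{\gamma}_\mathrm{B}^{-1}$, respectively). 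This yields $\overline{C}_\mathrm{B}^{\overline{\gamma}_\mathrm{B}\to\infty}=\tfrac{1}{\ln 2}\,\mathbb{E}[\ln\gamma_\mathrm{B}]$, which I would compute by inserting the PDF \eqref{eq14} (case $m_\mathrm{B}<\mu_\mathrm{B}$) or \eqref{eq15} (case $m_\mathrm{B}\ge\mu_\mathrm{B}$). Each is a finite nested sum of terms of the type $(a_1-a_2\gamma_\mathrm{B})\,\gamma_\mathrm{B}^{\,\delta-1}e^{-\lambda\gamma_\mathrm{B}}$ with integer $\delta\ge 0$, so the only integral needed is $\int_0^\infty x^{\delta-1}e^{-\lambda x}\ln x\,dx=\tfrac{(\delta-1)!}{\lambda^{\delta}}\bigl(-\mathcal{C}+\sum_{k=1}^{\delta-1}\tfrac{1}{k}-\ln\lambda\bigr)$; since every relevant $\lambda$ is proportional to $\overline{\gamma}_\mathrm{B}^{-1}$ (through $1/\Delta_1^\mathrm{B}$ or $1/\Delta_2^\mathrm{B}$), the $-\ln\lambda$ terms reproduce the $\log_2\overline{\gamma}_\mathrm{B}$ growth while the $-\mathcal{C}$ and harmonic-sum terms give the constant power offset, i.e.\ the quantity packaged as $\mathcal{U}(u)$ in \eqref{val2}.

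For the second piece, $\overline{C}_\mathrm{E}$ in \eqref{eq28} is computed exactly, since it does not depend on $\overline{\gamma}_\mathrm{B}$. Substituting the complementary CDF $1-F_{\gamma_\mathrm{E}}$ from \eqref{eq:cdfV1Eve} (case $m_\mathrm{E}<\mu_\mathrm{E}$) or \eqref{eq:cdfV2Eve} (case $m_\mathrm{E}\ge\mu_\mathrm{E}$) --- each a finite sum of terms $\propto\gamma_\mathrm{E}^{\,r}e^{-\gamma_\mathrm{E}/\Delta}$ --- the computation collapses to the tabulated integral $\int_0^\infty\tfrac{x^{\,r}}{1+x}\,e^{-x/\Delta}\,dx=r!\,e^{1/\Delta}\,\Gamma(-r,1/\Delta)$ \cite{Gradshteyn}. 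Carrying out the finite summations then expresses $\overline{C}_\mathrm{E}$ as a finite combination of products of complete and upper-incomplete gamma functions, which is precisely the $\mathcal{W}(w)$ bookkeeping appearing in \eqref{val2}.

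Assembling the two pieces in \eqref{eq25}, and keeping the four combinations of fading regimes ($m_\mathrm{B}<\mu_\mathrm{B}$ versus $m_\mathrm{B}\ge\mu_\mathrm{B}$, and $m_\mathrm{E}<\mu_\mathrm{E}$ versus $m_\mathrm{E}\ge\mu_\mathrm{E}$), which dictate whether \eqref{eq14}/\eqref{eq15} and \eqref{eq:cdfV1Eve}/\eqref{eq:cdfV2Eve} enter, one reaches \eqref{ascAsympV1} for $m_i<\mu_i$ and \eqref{ascAsympV2} for $m_i\ge\mu_i$. I expect the main obstacle to be organizational rather than analytical: none of the above integrals is difficult, but one must carry the nested multinomial sums $\rho(\cdot,\cdot)$ of Proposition~\ref{Propo2} through the whole computation, handle the two-term structure induced by the $(a_1-a_2\gamma_\mathrm{B})$ factor in the PDF (including the degenerate cases where the polynomial degree drops), convert the digamma values at integer arguments into $\mathcal{C}$ plus harmonic sums, and finally collapse everything into the compact auxiliary quantities $\mathcal{U}(u)$ and $\mathcal{W}(w)$ defined in \eqref{val2}. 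The only genuinely analytical point --- the interchange of the limit $\overline{\gamma}_\mathrm{B}\to\infty$ with the expectation in computing $\overline{C}_\mathrm{B}^{\overline{\gamma}_\mathrm{B}\to\infty}$ --- is settled by the squeeze bound on $\ln(1+\gamma_\mathrm{B})-\ln\gamma_\mathrm{B}$ noted above.
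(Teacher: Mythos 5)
Your proposal is correct and follows the paper's overall strategy: the same decomposition \eqref{eq25}, the same exact evaluation of $\overline{C}_\mathrm{E}$ by substituting \eqref{eq:cdfV1Eve} or \eqref{eq:cdfV2Eve} into \eqref{eq28} and invoking $\int_0^\infty\tfrac{x^{r}}{1+x}e^{-x/\Delta}dx=r!\,e^{1/\Delta}\Gamma(-r,1/\Delta)$, and the same target quantity for the main link. The one place you diverge is in how $\overline{C}_\mathrm{B}^{\overline{\gamma}_\mathrm{B}\to\infty}$ is extracted: the paper cites the MGF/moment-based high-SNR formula of \cite{yilmaz}, $\overline{C}_\mathrm{B}^{\overline{\gamma}_\mathrm{B}\to\infty}\approx\log_2(\overline{\gamma}_\mathrm{T})+\log_2(e)\,\mathcal{M}'(0)$ with $\mathcal{M}(n)=\mathbb{E}[\gamma_\mathrm{B}^n]/\overline{\gamma}_\mathrm{B}^n$, evaluates $\mathcal{M}(n)$ in closed form from \eqref{eq14}--\eqref{eq15} via \cite[Eq.~(3.351.3)]{Gradshteyn}, and differentiates at $n=0$; you instead compute $\tfrac{1}{\ln 2}\mathbb{E}[\ln\gamma_\mathrm{B}]$ directly through $\int_0^\infty x^{\delta-1}e^{-\lambda x}\ln x\,dx$. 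Since $\mathcal{M}'(0)=\mathbb{E}[\ln\gamma_\mathrm{B}]-\ln\overline{\gamma}_\mathrm{B}$, the two routes are the same computation with the differentiation performed before versus after the integration, and both produce the digamma-at-integer / harmonic-sum structure packaged in $\mathcal{U}$ and $\mathcal{W}$, including the degenerate $u=0$, $w=0$ branches you correctly anticipate. Your squeeze argument $0\le\ln(1+\gamma)-\ln\gamma\le 1/\gamma$ is actually a rigorous justification of the limit that the paper only asserts by citation; the sole caveat is that $\mathbb{E}[1/\gamma_\mathrm{B}]<\infty$ requires $N_\mathrm{A}N_\mathrm{B}\mu_\mathrm{B}>1$ (the density near the origin scales as $\gamma_\mathrm{B}^{N_\mathrm{A}N_\mathrm{B}\mu_\mathrm{B}-1}$), so in the extreme case $N_\mathrm{A}=N_\mathrm{B}=\mu_\mathrm{B}=1$ you would need the slightly weaker dominating bound $\ln(1+1/\gamma)$ split near the origin --- a cosmetic fix, not a gap.
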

\begin{proof}
See Appendix~\ref{ap:ascAsympt}.
\end{proof}

%%%%%%%%%%%%%%%%%%%%%%%%
%%%%%%%%%%%%%%%%%%%%%%%%%%%%%%%%%%%%%%%%%%%%%%%%%%  ASC asymptotic version 1 %%%% 
\begin{figure*}[ht!]
	\hrulefill
	\begin{normalsize}
\begin{align}\label{ascAsympV1}
\overline{C}_\mathrm{S}^{\infty}\simeq 
  &  \log_2(N_\mathrm{B}\overline\gamma_{\mathrm{B}}) + \log_2(e) \Biggr( \sum_{k=1}^{N_\mathrm{A}}(-1)^k\binom{N_\mathrm{A}}{k} \sum_{c=0}^{k}\binom{k}{c} \sum_{\rho\left (c,\nu_\mathrm{B}  \right ) }^{ } \frac{c!}{p_1!\cdots p_{\nu_\mathrm{B}}!}  \left[ \prod_{q=1}^{\nu_\mathrm{B}}\left(   \frac{\left ( \tfrac{N_\mathrm{B}\overline\gamma_{\mathrm{B}} }{\Delta_2^\mathrm{B}} \right )^{\nu_\mathrm{B}-q}  }{(\nu_\mathrm{B}-q)!} \sum_{z=\nu_\mathrm{B}+1-q}^{\nu_\mathrm{B}} A_{2,\nu_\mathrm{B}+1-z}^\mathrm{B} \right)^{p_q}\right]
\nonumber \\ & \times   \sum_{\rho\left (k-c,\eta_\mathrm{B}  \right ) }^{ } \frac{(k-c)!}{s_1!\cdots s_{\eta_\mathrm{B}}!}   \left [ \prod_{t=1}^{\eta_\mathrm{B}} \left (  \frac{\left ( \frac{N_\mathrm{B}\overline\gamma_{\mathrm{B}} }{\Delta_1^\mathrm{B}} \right )^{\eta_\mathrm{B}-t} }{(\eta_\mathrm{B}-t)!}  \sum_{z=\eta_\mathrm{B}+1-t}^{\eta_\mathrm{B}} A_{1,\eta_\mathrm{B}+1-z}^\mathrm{B}   \right )^{s_t}\right] \mathcal{U}\left ( \sum_{t=1 }^{\eta_\mathrm{B}}(\eta_\mathrm{B}-t)s_t+\sum_{q=1 }^{\nu_\mathrm{B} }(\nu_\mathrm{B}-q)p_q \right )- \frac{1}{\ln 2}  \nonumber \\ & \times \Biggr( \exp\left ( \tfrac{1}{\Delta_1^{\mathrm{E}}} \right )  \sum_{j=1}^{\eta_\mathrm{E}}A_{1,j }^{\mathrm{E}}\sum_{r=0}^{\eta_\mathrm{E}-j}\frac{  \Gamma\left (-r, \frac{1}{\Delta_1^{\mathrm{E}}} \right )}{r!} \left ( \frac{1}{\Delta_1^{\mathrm{E}}} \right )^{r}\Gamma\left ( 1+r \right )  
   +\exp\left ( \tfrac{1}{\Delta_2^{\mathrm{E}}} \right )\sum_{j=1}^{\nu_\mathrm{E}}A_{2,j}^{\mathrm{E}}\sum_{r=0}^{\nu_\mathrm{E}-j}\frac{\Gamma\left (-r,\frac{1}{\Delta_2^{\mathrm{E}}}  \right )}{r!}\left ( \frac{1}{\Delta_2^{\mathrm{E}}} \right )^{r} \Gamma\left (1+r \right )\Biggr). 
 \end{align}
	\end{normalsize}
%	\hrulefill
%	\vspace{-5mm}
\end{figure*}
%%%%%%%%%%%%%%%%%%%%%%%%%%%%%%%%

%%%%%%%%%%%%%%%%%%%%%%%%
%%%%%%%%%%%%%%%%%%%%%%%%%%%%%%%%%%%%%%%%%%%%%%%%%%  ASC asymptotic version 2 %%%% 
\begin{figure*}[ht!]
	%\hrulefill
	\begin{normalsize}
\begin{align}\label{ascAsympV2}
\overline{C}_\mathrm{S}^{\infty}\simeq 
  &  \log_2(N_\mathrm{B}\overline\gamma_{\mathrm{B}}) + \log_2(e) \sum_{k=1}^{N_\mathrm{A}}(-1)^k\binom{N_\mathrm{A}}{k} \sum_{\rho\left (k,\nu_\mathrm{B}  \right ) }^{ } \frac{k!}{s_1!\cdots s_{\nu_\mathrm{B}}!}  \left[ \prod_{t=1}^{\nu_\mathrm{B}} \left( \frac{\left ( \tfrac{N_\mathrm{B}\overline\gamma_{\mathrm{B}} }{\Delta_2^\mathrm{B}} \right )^{\nu_\mathrm{B}-t} }{(\nu_\mathrm{B}-t)!} \sum_{z=\beta_\mathrm{B}+1-\mathcal{T}(j-1)}^{\beta_\mathrm{B}}B_{\beta_\mathrm{B}-z}^\mathrm{B}\right)^{s_t}\right]\nonumber \\ & \times \mathcal{W}\left ( \sum_{t=1 }^{\nu_\mathrm{B}}(\nu_\mathrm{B}-t)s_t \right )-\frac{1}{\ln 2} \exp\left ( \frac{1}{\Delta_2^\mathrm{E}} \right ) \sum_{j=0}^{\beta_\mathrm{E}}B_{j}^\mathrm{E}\sum_{r=0}^{\nu_\mathrm{E}-j-1}\frac{1}{r!}\left ( \frac{1}{\Delta_2^\mathrm{E}} \right )^r\Gamma\left ( 1+r \right )\Gamma\left (-r,\frac{1}{\Delta_2^\mathrm{E}} \right ).
 \end{align}
	\end{normalsize}
	%\hrulefill
%	\vspace{-5mm}
\end{figure*}
%%%%%%%%%%%%%%%%%%%%%%%%%%%%%%%%

%\newpage
%%%%%%%%%%%%%%%%%%%%%%%%
%%%%%%%%%%%%%%%%%%%%%%%%%%%%%%%%%%%%%%%%%%%%%%%%%%  FUNCION U PARA EQ24 %%%% 
\begin{figure*}[ht]
	%\hrulefill
	\begin{normalsize}
\begin{equation}\label{val1}
\nonumber
    \mathcal{U}(u)=\begin{cases}
       \mathcal{C}+\ln\left (\frac{\left ( k-c \right )N_\mathrm{B}\overline\gamma_{\mathrm{B}}}{\Delta_1^{\mathrm{B}}}  +\frac{cN_\mathrm{B}\overline\gamma_{\mathrm{B}}}{\Delta_2^{\mathrm{B}}}\right ), & \text{for $u=0$}\\
        -\left ( \frac{N_\mathrm{B}\overline\gamma_{\mathrm{B}}\left ( c \Delta_1^{\mathrm{B}}+\left ( k-c \right )\Delta_2^{\mathrm{B}} \right )
}{\Delta_1^{\mathrm{B}}\Delta_2^{\mathrm{B}}} \right )^{-\left ( \sum_{t=1 }^{\eta_\mathrm{B}}(\eta_\mathrm{B}-t)s_t+\sum_{q=1 }^{\nu_\mathrm{B} }(\nu_\mathrm{B}-q)p_q  \right )}\Gamma\left ( \sum_{t=1 }^{\eta_\mathrm{B}}(\eta_\mathrm{B}-t)s_t+\sum_{q=1 }^{\nu_\mathrm{B} }(\nu_\mathrm{B}-q)p_q  \right ), & \text{otherwise}.
         \end{cases}
\end{equation}
	\end{normalsize}
%	\hrulefill
%	\vspace{-5mm}
\end{figure*}
%%%%%%%%%%%%%%%%%%%%%%%%%%%%%%%%

%%%%%%%%%%%%%%%%%%%%%%%%
%%%%%%%%%%%%%%%%%%%%%%%%%%%%%%%%%%%%%%%%%%%%%%%%%%  FUNCION W PARA EQ25 %%%% 
\begin{figure*}[ht!]
	%\hrulefill
	\begin{normalsize}
\begin{equation}\label{val2}
%\nonumber
     \mathcal{W}(w)=\begin{cases}
       \mathcal{C}+\ln\left (\frac{kN_\mathrm{B}\overline\gamma_{\mathrm{B}}}{\Delta_2^{\mathrm{B}}}\right ), & \text{for $w=0$}\\
        -\left ( \frac{k N_\mathrm{B}\overline\gamma_{\mathrm{B}}
}{\Delta_2^{\mathrm{B}}} \right )^{-\left ( \sum_{t=1 }^{\nu_\mathrm{B}}(\nu_\mathrm{B}-t)s_t  \right )}\Gamma\left ( \sum_{t=1 }^{\nu_\mathrm{B}}(\nu_\mathrm{B}-t)s_t  \right ), & \text{otherwise}.
         \end{cases}
\end{equation}
	\end{normalsize}
	\hrulefill
%	\vspace{-5mm}
\end{figure*}
%%%%%%%%%%%%%%%%%%%%%%%%%%%%%%%%

%%%%%%%%%%%%%%%%%%%%%%%%%%%%%%%%
%NUMERICAL RESULTS
\section{Numerical results and discussions} \label{sect:numericals} 
In this section, we provide illustrative numerical results along with Monte Carlo simulations to verify the proposed analytical derivations. In all plots, as a consequence of using the $\kappa$-$\mu$ shadowed fading statistics in \cite{javiermistura}, we consider that the fading severity parameters (i.e.,  $\mu_i$ and $m_i$ for $i \in \left \{ \mathrm{B},\mathrm{E} \right \}$) take integer values. We use integer values for the following reasons: $(i)$ the shape parameter $\mu_{\mathrm{B},\mathrm{E}}$ was originally defined in the $\kappa$-$\mu$ distribution as the number of clusters of multipath waves propagating in a certain environment~\cite{yacoub}. So, as asserted in~\cite{yacoub}, the consideration that the parameters, $\mu_{\mathrm{B},\mathrm{E}}$ to take integer values is related  to the physical model for the $\kappa$-$\mu$ distribution; and $(ii)$ in practice, the impact of restricting the fading parameter $m_{\mathrm{B},\mathrm{E}}$ to take integer values is noticeable only in severe shadowing environments (i.e., low values of $m_{\mathrm{B},\mathrm{E}}$). For medium to mild shadowing scenarios (i.e., high values of $m_{\mathrm{B},\mathrm{E}}$), the impact of constraining $m_{\mathrm{B},\mathrm{E}}$ to take integer values is even more negligible~\cite{javiermistura}. Also, in all figures, Monte Carlo simulations are represented with markers.

In Fig.~\ref{fig1Sop}, we compare the SOP as a function of $\overline{\gamma}_\mathrm{B}$ for different numbers of transmit antennas, $N_\mathrm{A}$, while the number of receive antennas is set to $N_\mathrm{B}=N_\mathrm{E}=2$. Moreover, other system parameters are setting as: $R_{\mathrm{S}}$ = 1 bps/Hz, $\overline{\gamma}_\mathrm{E}=8$ dB, 
$\mu_i=2$, $\kappa_i=2$, and $m_i=3$ for $i \in \left \{ \mathrm{B},\mathrm{E} \right \}$. Note that in all instances, our analytical expressions, for exact and asymptotic SOP, perfectly match with Monte Carlo simulations. Here, our goal is to analyze the impact of $N_\mathrm{A}$ on the secrecy diversity gain of the legitimate channels for the considered cases. Therefore, based on the asymptotic plots, we see that the antenna configuration at Alice clearly contributes to the slope of the SOP in a proportional way.  On one hand, this means that the decay of the SOP is steeper (i.e., better secrecy performance) as the number of transmit antennas increases. On the other hand, as the number of transmit antennas decreases the SOP is impaired and the decay is not so pronounced. These facts are in coherence with the results discussed in Remark 1.  %{\color{red} It would be possible to have the figure for different configurations of NA NB and muB, then we can see actually the diversity order, something like 111 311 131 115?}
% figure  1 sop
\begin{figure}[t]
\centering 
\psfrag{H}[Bc][Bc][0.6]{$K_{\mathrm{dB}}^{\mathrm{B}}=25$ dB}
\psfrag{Z}[Bc][Bc][0.6][0]{$K_{\mathrm{dB}}^{\mathrm{B}}=15$ dB}
\includegraphics[width=1\linewidth]{./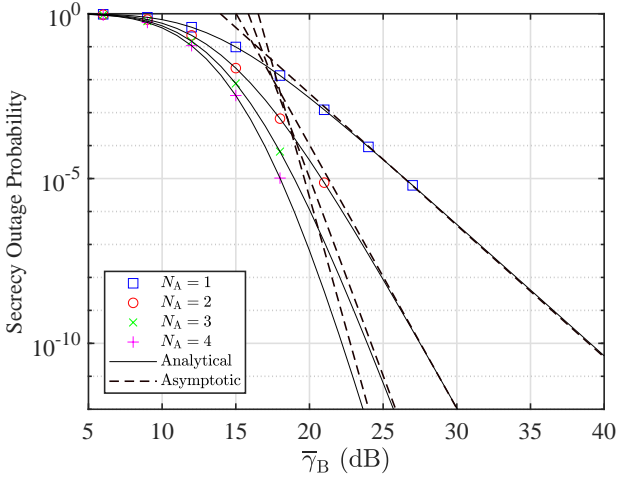} \caption{SOP vs. $\overline{\gamma}_\mathrm{B}$, for various numbers of transmit antennas,  $N_\mathrm{A}$, and a fixed number of receive antennas, $N_\mathrm{B}=N_\mathrm{E}=2$. The setting parameter values are: $R_{\mathrm{S}}$ = 1 bps/Hz, $\overline{\gamma}_\mathrm{E}=8$ dB, 
$\mu_i=2$, $\kappa_i=2$, and $m_i=3$ for $i \in \left \{ \mathrm{B},\mathrm{E} \right \}$. Markers denote Monte Carlo simulations.}
\label{fig1Sop}
\end{figure}

Fig.~\ref{fig2Sop} presents the SOP vs. $\overline{\gamma}_\mathrm{B}$ for different numbers of eavesdroppers antennas, $N_\mathrm{E}$, and a fixed number of antennas at the legitimate nodes, $N_\mathrm{A}=N_\mathrm{B}=2$. The remainder
parameters are set to:  $R_{\mathrm{S}}$ = 1 bps/Hz, $\overline{\gamma}_\mathrm{E}=8$ dB, $\mu_i=3$, and $\kappa_i=5$, for $i \in \left \{ \mathrm{B},\mathrm{E} \right \}$. In this scenario, we explore the impact of having light ($m_\mathrm{B}=m_\mathrm{E}=10$) or heavy ($m_\mathrm{B}=m_\mathrm{E}=1$) shadowing on the LOS components at both Bob and Eve in an environment with multiple antennas. It can be observed that the combination of mild shadowing in the LOS components with a reduced number of antennas at Eve derives into a better secrecy performance, as expected. Conversely, when the shadowing is heavy or a large number of antennas is used at the eavesdropper, these always lead to lower secrecy performance. 

% figure  2 sop
\begin{figure}[t]
\centering 
\psfrag{H}[Bc][Bc][0.6]{$K_{\mathrm{dB}}^{\mathrm{B}}=25$ dB}
\psfrag{Z}[Bc][Bc][0.6][0]{$K_{\mathrm{dB}}^{\mathrm{B}}=15$ dB}
\includegraphics[width=1\linewidth]{./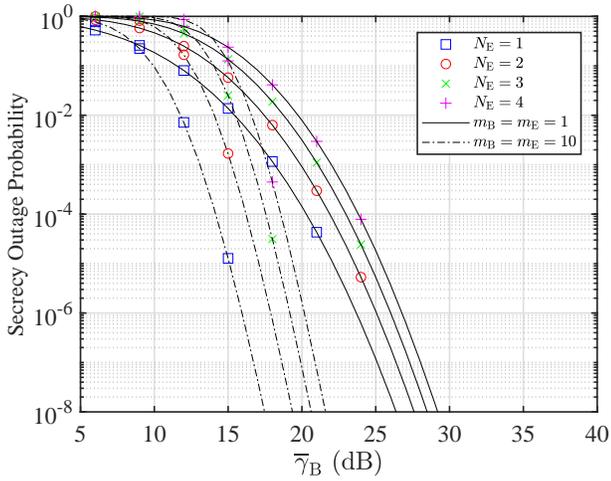} \caption{SOP vs. $\overline{\gamma}_\mathrm{B}$, for various numbers of eavesdroppers antennas,  $N_\mathrm{E}$, and a fixed number antennas, $N_\mathrm{A}=N_\mathrm{B}=2$. The setting parameter values are: $R_{\mathrm{S}}$ = 1 bps/Hz, $\overline{\gamma}_\mathrm{E}=8$ dB, 
$\mu_i=3$, and $\kappa_i=5$, for $i \in \left \{ \mathrm{B},\mathrm{E} \right \}$. Markers denote Monte Carlo simulations, whereas the solid and dash-dotted lines represent analytical solutions.}
\label{fig2Sop}
\end{figure}

In Fig.~\ref{fig3Sop}, we illustrate the SOP as a function of $\overline{\gamma}_\mathrm{B}$ by considering different numbers of receive antennas $N_\mathrm{B}$, and fixed number of antennas $N_\mathrm{A}=N_\mathrm{E}=2$. The other parameters are setting as follows: $R_{\mathrm{S}}$ = 2 bps/Hz, $\overline{\gamma}_\mathrm{E}=8$ dB, 
$\mu_i=1$, and $m_i=2$ for $i \in \left \{ \mathrm{B},\mathrm{E} \right \}$. In this scenario, we consider small ($\kappa_\mathrm{B}=\kappa_\mathrm{E}=1.5$) and large ($\kappa_\mathrm{B}=\kappa_\mathrm{E}=10$) LOS components on the received wave clusters for a different number of antennas at Bob. We observe that the joint effect of increasing the number of Bob's antennas (which improves the secrecy diversity gain) and strong LOS components ($\kappa_\mathrm{B}=\kappa_\mathrm{E}=10$) leads to a significant improvement on the secrecy performance. 
This result is linked to the fact that $N_\mathrm{B}$ directly influences the slope of the SOP, as shown in Remark 1. However, in the opposite scenario (wherein both $N_\mathrm{B}$ and $\kappa_i$ for $i \in \left \{ \mathrm{B},\mathrm{E} \right \}$ decrease), we note that the secrecy performance significantly deteriorates.

% figure 3 sop
\begin{figure}[t]
\centering 
\psfrag{H}[Bc][Bc][0.6]{$K_{\mathrm{dB}}^{\mathrm{B}}=25$ dB}
\psfrag{Z}[Bc][Bc][0.6][0]{$K_{\mathrm{dB}}^{\mathrm{B}}=15$ dB}
\includegraphics[width=1\linewidth]{./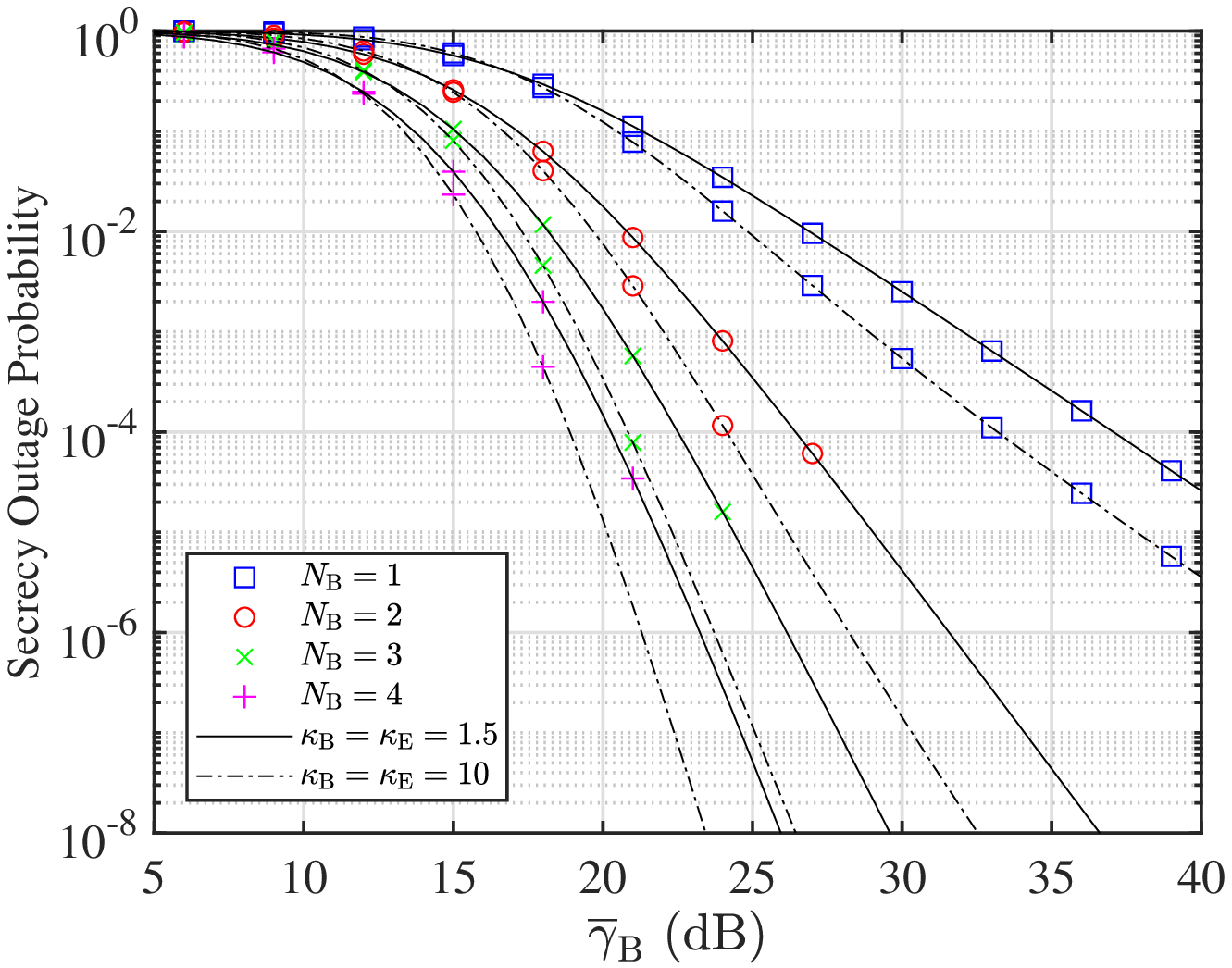} \caption{SOP vs. $\overline{\gamma}_\mathrm{B}$, for different numbers of receive antennas,  $N_\mathrm{B}$, and unchanged number of: $(i)$ receive antennas, $N_\mathrm{E}=2$, and $(ii)$ transmit antennas, $N_\mathrm{A}=2$. The setting parameter values are: $R_{\mathrm{S}}$ = 2 bps/Hz, $\overline{\gamma}_\mathrm{E}=8$ dB, $\mu_i=1$, and $m_i=2$ for $i \in \left \{ \mathrm{B},\mathrm{E} \right \}$. Markers denote Monte Carlo simulations, whereas the solid and dash-dotted lines represent analytical solutions.}
\label{fig3Sop}
\end{figure}

Fig.~\ref{fig4Sop} shows the SOP vs $\overline{\gamma}_\mathrm{B}$ for $N_\mathrm{A} = N_\mathrm{B} = 2$, $N_\mathrm{E} = 3$ and different received wave clusters, $\mu_\mathrm{B}$ and $\mu_\mathrm{E}$. The remainder parameters are set to: $R_{\mathrm{S}}$ = 2 bps/Hz, $\overline{\gamma}_\mathrm{E}=8$ dB, 
$\kappa_i=4$, and $m_i=5$ for $i \in \left \{ \mathrm{B},\mathrm{E} \right \}$. In the proposed scenarios, we investigate the influence of the number of wave clusters at the receiver nodes on the secrecy performance. We consider the following two cases: $(i)$ $\mu_\mathrm{E}$ is kept fixed, whereas $\mu_\mathrm{B}$ goes from $2$ to $5$; $(ii)$ $\mu_\mathrm{B}$ is kept unchanged, whereas $\mu_\mathrm{E}$ goes from $2$ to $5$. In the former case, we note that the secrecy diversity order varies at the rate of the parameter $\mu_\mathrm{B}$. For instance, as $\mu_\mathrm{B}$ increases, the secrecy performance improves. In the latter case, it is observed that as $\mu_\mathrm{E}$ increases, the slope of the SOP remains identical. This fact corroborates our finding (see Remark 1) that the secrecy diversity gain of the system is not affected by the number of received wave clusters at Eve. From a secrecy perspective, this result is a valuable insight into the design and implementation criteria of future mobile networks.
%\color{blue}In this figure no variation is observed when $\mu_B$ remains unchanged , and on the other case, an insignificant improvement is observed. What would happen if you show cases $\mu_B=2 , \mu_E=5$;$\mu_B=5,\mu_E=2$; $\mu_B=5,\mu_E=5$

% figure 4 sop
\begin{figure}[t]
\centering 
\psfrag{H}[Bc][Bc][0.6]{$\mu_\mathrm{B}=2, 3, 4, 5.$}
\psfrag{Z}[Bc][Bc][0.6][0]{$\mu_\mathrm{E}=2, 3, 4, 5.$}
\includegraphics[width=1\linewidth]{./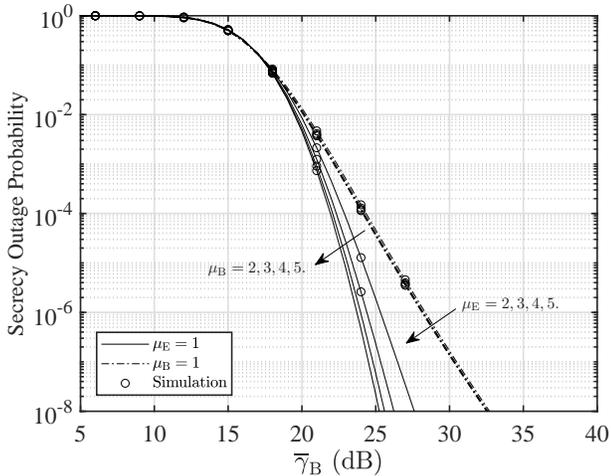} \caption{SOP vs. $\overline{\gamma}_\mathrm{B}$, for $N_\mathrm{A} = N_\mathrm{B} = 2$, $N_\mathrm{E} = 3$, and different received wave clusters, $\mu_\mathrm{B}$, and $\mu_\mathrm{E}$. The setting parameter values are: $R_{\mathrm{S}}$ = 2 bps/Hz, $\overline{\gamma}_\mathrm{E}=8$ dB, $\kappa_i=4$, and $m_i=5$ for $i \in \left \{ \mathrm{B},\mathrm{E} \right \}$. The solid and dash-dotted lines represent analytical solutions.}
\label{fig4Sop}
\end{figure}

In Fig.~\ref{fig7Sop}, we show SOP against the
$\kappa_i$ values with fixed fluctuation $m_i=3$ for $i \in \left \{ \mathrm{B},\mathrm{E} \right \}$.
For all curves, the configuration parameters are as follows: $N_\mathrm{A} = 3$, $N_\mathrm{B} = N_\mathrm{E} = 2$, $R_{\mathrm{S}}$ = 3 bps/Hz, $\overline{\gamma}_\mathrm{E}=8$ dB, and $\overline{\gamma}_\mathrm{B}=25$ dB. Here, we investigate the achievable SOP when the LOS components, i.e. $\kappa_i$ (for $i \in \left \{ \mathrm{B},\mathrm{E} \right \}$) increase. We observe three different scenarios for the SOP behavior regarding the configuration of parameters. For $\mu_i>m_i$ (with $i \in \left \{ \mathrm{B},\mathrm{E} \right \}$), increasing the received power through the LOS components is detrimental for the secrecy performance, which may seem counter-intuitive at a first glance. However, the case with $\mu_\mathrm{B}>m_\mathrm{B}$ (note that $\mu_\mathrm{E}$ becomes irrelevant as indicated in Fig. \ref{fig4Sop}) indicates that the dominant components associated to LOS are affected by a larger fading severity than the scattering counterpart, and therefore performance worsens as $\kappa_{\mathrm B}$ increases.
Conversely, $\mu_i<m_i$ (with $i \in \left \{ \mathrm{B},\mathrm{E} \right \}$), the SOP is enhanced as the LOS components increase. For the specific case where $\mu_i=m_i$ (with $i \in \left \{ \mathrm{B},\mathrm{E} \right \}$), we note that the SOP does not vary according to the parameter $\kappa_i$ (for $i \in \left \{ \mathrm{B},\mathrm{E} \right \}$). We can explain this observation because if $\mu_i=m_i$ (for $i \in \left \{ \mathrm{B},\mathrm{E} \right \}$) this implies that both the scattering and the shadowed LOS components in each cluster experience the same fading severity. Therefore, SOP becomes independent of $\kappa_i$ in this setup.

% figure 7 sop
\begin{figure}[t]
\centering 
\psfrag{H}[Bc][Bc][0.6]{$\mu_\mathrm{B}=2, 3, 4, 5.$}
\psfrag{Z}[Bc][Bc][0.6][0]{$\mu_\mathrm{E}=2, 3, 4, 5.$}
\includegraphics[width=1\linewidth]{./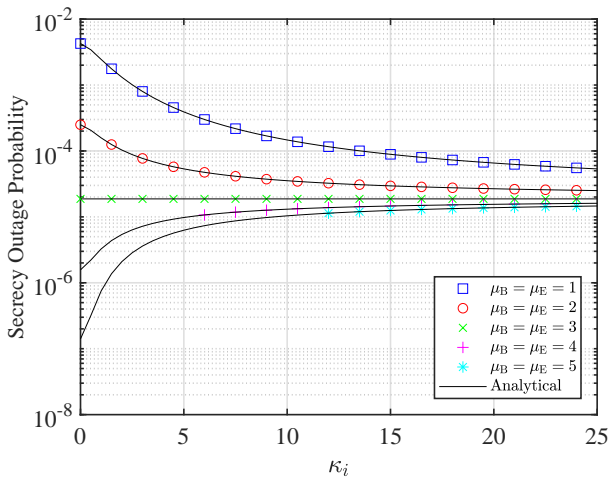} \caption{
SOP vs. $\kappa_i$, with $N_\mathrm{A} = 3$, $N_\mathrm{B} = N_\mathrm{E} = 2$, and fixed $m_i=3$ for $i \in \left \{ \mathrm{B},\mathrm{E} \right \}$. The setting parameter values are: $R_{\mathrm{S}}$ = 3 bps/Hz, $\overline{\gamma}_\mathrm{E}=8$ dB, and $\overline{\gamma}_\mathrm{B}=25$ dB. Markers denote Monte Carlo simulations.}
\label{fig7Sop}
\end{figure}
%{\color{red}For next figure, why dont you evaluate different configurations of antennas at A, B and E, 111, 131, 311, 113? For figure 8, instead of playing with NB, just concentrate on the parameters for 2 or 3 antennas at all nodes. We can observe that for the configurated fading parameters, the asymptotic behavior }
Fig.~\ref{fig5asc} depicts the ASC performance vs. $\overline{\gamma}_\mathrm{B}$, for different configurations of $N_\mathrm{A}, N_\mathrm{B}$, and $N_\mathrm{E}$. The remainder parameters are set to: $\overline{\gamma}_\mathrm{E}=8$ dB, number of clusters
$\mu_i=2$ with high
fluctuation $m_i=1$, and LOS environments $\kappa_i=5$ for $i \in \left \{ \mathrm{B},\mathrm{E} \right \}$. From all figures, it is straightforward to see that for the scenarios with severe LOS fluctuation, an excellent strategy to improve the $\overline{C}_\mathrm{S}$ is to equip Bob with more antennas than Alice. In the opposite scenario, when Eve's capabilities (e.g., more antennas) are better than those of legitimate peers, the secrecy performance is compromised.

% figure 5 asc
\begin{figure}[t]
\centering 
\psfrag{H}[Bc][Bc][0.6]{$\mu_\mathrm{B}=2, 3, 4, 5.$}
\psfrag{Z}[Bc][Bc][0.6][0]{$\mu_\mathrm{E}=2, 3, 4, 5.$}
\includegraphics[width=1\linewidth]{./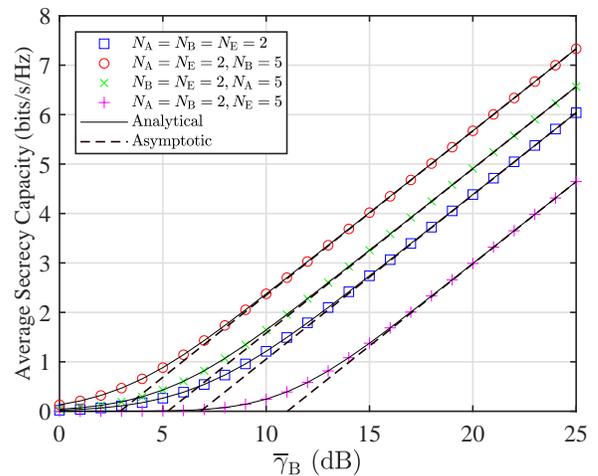} \caption{ASC vs. $\overline{\gamma}_\mathrm{B}$, for different configurations of $N_\mathrm{A}, N_\mathrm{B}$, and $N_\mathrm{E}$. The corresponding parameter values are: $\overline{\gamma}_\mathrm{E}=8$ dB, 
$\mu_i=2$, $m_i=1$, and $\kappa_i=5$ for $i \in \left \{ \mathrm{B},\mathrm{E} \right \}$. Markers denote Monte Carlo simulations.}
\label{fig5asc}
\end{figure}

% figure 6 asc
\begin{figure}[t]
\centering 
\psfrag{H}[Bc][Bc][0.6]{$\mu_\mathrm{B}=2, 3, 4, 5.$}
\psfrag{Z}[Bc][Bc][0.6][0]{$\mu_\mathrm{E}=2, 3, 4, 5.$}
\includegraphics[width=1\linewidth]{./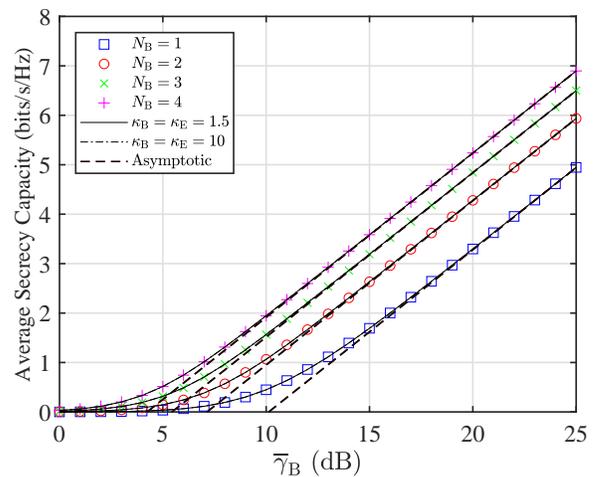} \caption{ASC vs. $\overline{\gamma}_\mathrm{B}$, for different numbers of receive antennas, $N_\mathrm{B}$, and fixed number of antennas, $N_\mathrm{A} = N_\mathrm{E} = 2$. The corresponding parameter values are: $\overline{\gamma}_\mathrm{E}=8$ dB, and 
$\mu_i=m_i=2$ for $i \in \left \{ \mathrm{B},\mathrm{E} \right \}$. Markers denote Monte Carlo simulations, whereas the solid and dash-dotted lines represent analytical solutions.}
\label{fig6asc}
\end{figure}

Finally, Fig.~\ref{fig6asc} shows the ASC as a function of $\overline{\gamma}_\mathrm{B}$, considering different numbers of receive antennas, $N_\mathrm{B}$, and a fixed number of antennas, $N_\mathrm{A} = N_\mathrm{E} = 2$. The remainder parameters are set to: $\overline{\gamma}_\mathrm{E}=8$ dB, and 
$\mu_i=m_i=2$ for $i \in \left \{ \mathrm{B},\mathrm{E} \right \}$. We note that $\overline{C}_\mathrm{S}$ is not affected by increasing the power of the LOS components (i.e., $\kappa_i = 1.5$ to $\kappa_i = 10$ for $i \in \left \{ \mathrm{B},\mathrm{E} \right \}$). This result confirms that an increase in the power of the LOS components does not always favor the $\overline{C}_\mathrm{S}$. This observation is linked to the discussions in Fig.~\ref{fig7Sop}, so $\overline{C}_\mathrm{S}$ is independent of $\kappa_i$. Obviously, this channel behavior changes when $m_i\geq\mu_i$ or $m_i<\mu_i$ (for $i \in \left \{ \mathrm{B},\mathrm{E} \right \}$). In addition, in Fig.~\ref{fig5asc} and Fig.~\ref{fig6asc}, we see that the asymptotic ASC curves tightly approximate the Monte Carlo simulations and the exact analytical values in the high SNR regime.

%%%%%%%%%%%%%%%%%%%%%%%%%%%%%%%%
%CONCLUSIONS

\section{Conclusions}
We analyzed how different propagation mechanisms, namely LOS condition, LOS fluctuation and clustering of scattered multipath waves, impact the PLS performance of MIMO wiretap systems. Our closed-form exact and asymptotic expressions revealed that a larger contribution of LOS components with a weak fluctuation  together with a rich scattering condition for the legitimate link favors a communication with secrecy. However, whenever the LOS components arriving at Bob suffer from a larger fading severity than the multipath clustering counterpart (i.e. $m_{\rm B}<\mu_{\rm B}$), secrecy performance worsens as $\kappa_{\rm B}$ is increased. We observe that the asymptotic behavior depends not only on the number of antennas of the legitimate pairs (as expected) but also on the scattering environment (i.e., $\mu_\mathrm{B}$) of the legitimate link. This fact is a crucial insight to be taken into account in the design criteria of the next networks. 
We also verified that the role of the fading parameters at Eve becomes less important as $\overline\gamma_{\rm B}>\overline\gamma_{\rm E}$.

\appendices
%%%%%%%%%%%%%%%%%%%%%%%%%%%%%%%%
%%%%%%%%%%%%%%%%%%%%%%%%%%%%%%%%
%APPENDIX A
%\newpage
\section{ Proof of Proposition~\ref{Propo1}}
\label{ap:cdfBob}
The instantaneous SNR at Bob is given by $\gamma_\mathrm{B}=\sum_{l=1}^{N_\mathrm{B}}\gamma_{k^*,l}$. So, using~\eqref{eq7}, the CDF of $\gamma_\mathrm{B}$ can be formulated as
\begin{align}\label{eq1ApeA}
 F_{\gamma_{\mathrm{B}}}(\gamma_\mathrm{B})=& \Biggl( F_{\gamma_1}(\gamma_\mathrm{B}) \Biggl) ^{N_\mathrm{A}},
 \end{align} 
 where $\gamma_1=\sum_{l=1}^{N_\mathrm{B}}\gamma_{k,l}$ with $\gamma_{k,l}$ denoting the  instantaneous received SNR
  of the link between a single transmitting $k$-th antenna at Alice and the $l$-th receive antenna at Bob. In dealing with i.i.d. channels, the CDF of $\gamma_{1}$ can be obtained by following the same methodology used for~\eqref{eq10}, and~\eqref{eq11}, i.e., $\gamma_{1}\sim \left (N_{\mathrm{B}}\overline{\gamma}_{\mathrm{B}},\kappa_{\mathrm{B}},N_{\mathrm{B}}\mu_{\mathrm{B}},N_{\mathrm{B}}m_{\mathrm{B}}  \right )$. However, the resulting CDFs of $\gamma_{1}$\footnote{The resulting CDFs of $\gamma_{1}$ 
refer to~\eqref{eq:cdfV1Eve} and~\eqref{eq:cdfV2Eve} by changing all the subscripts $\mathrm{E}$ by $\mathrm{B}$.} become intractable in developing~\eqref{eq1ApeA}, if not impossible. Therefore, we propose to reformulate such CDFs of $\gamma_{1}$
from its original forms to equivalent expressions by changing the indices of the sums and rearranging some of the terms, so we obtain

\begin{itemize}
  \item \textbf{If} $m_\mathrm{B}<\mu_\mathrm{B}$
\end{itemize}

 \begin{align}\label{eq2ApeA}
F_{\gamma_1 }(\gamma_\mathrm{B})=&1-\sum_{j=1}^{\eta_\mathrm{B}}\left ( \tfrac{\gamma_\mathrm{B}}{\Delta_1^\mathrm{B}} \right )^{\eta_\mathrm{B}-j}\tfrac{\exp\left ( -\frac{\gamma_\mathrm{B}}{\Delta_1^\mathrm{B}} \right )}{(\eta_\mathrm{B}-j)!} \sum_{z=\eta_\mathrm{B}+1-j}^{\eta_\mathrm{B}} A_{1,\eta_\mathrm{B}+1-z }^\mathrm{B} \nonumber \\
 & -\sum_{j=1}^{\nu_\mathrm{B}} \left ( \tfrac{\gamma_\mathrm{B}}{\Delta_2^\mathrm{B}} \right )^{\nu_\mathrm{B}-j}\tfrac{\exp\left ( -\frac{\gamma_\mathrm{B}}{\Delta_2^\mathrm{B}} \right )}{(\nu_\mathrm{B}-j)!}\sum_{z=\nu+1-j}^{\nu} A_{2,\nu_\mathrm{B}+1-z }^\mathrm{B} ,
\end{align}
where $\eta_\mathrm{B}=N_\mathrm{B}(\mu_\mathrm{B}-m_\mathrm{B})$, and $\nu_\mathrm{B} =N_\mathrm{B}m_\mathrm{B}$.

\begin{itemize}
  \item \textbf{If} $m_\mathrm{B}\geq \mu_\mathrm{B}$
\end{itemize}

 \begin{align}\label{eq3ApeA}
F_{\gamma_1 }(\gamma_\mathrm{B})=&1-\sum_{j=0}^{\nu_\mathrm{B}-1}\left ( \tfrac{\gamma_\mathrm{B}}{\Delta_2^\mathrm{B}} \right )^{\nu_\mathrm{B}-1-j} \tfrac{\exp\left ( -\frac{\gamma_\mathrm{B}}{\Delta_2^\mathrm{B}} \right )}{(\nu_\mathrm{B}-1-j)!}   \nonumber \\ & \times\sum_{z=\beta_\mathrm{B}+1-\mathcal{T}( j)}^{\beta_\mathrm{B}}  B_{\beta_\mathrm{B}-z }^\mathrm{B},
\end{align}
where $\beta_\mathrm{B}=N_\mathrm{B}(m_\mathrm{B}-\mu_\mathrm{B})$, the coefficients marked with superscripts $\mathrm{B}$ (e.g., $\Delta_1^{\mathrm{B}}$) are associated to the fading parameters at Bob, and
   \begin{equation}\nonumber
     \mathcal{T}(j)=\begin{cases}
       j+1, & \text{for $0\leq j\leq \beta_\mathrm{B}$}\\
        \beta_\mathrm{B}+1, & \text{otherwise}.
         \end{cases}
\label{eq:beta}
\end{equation}
In both~\eqref{eq2ApeA} and~\eqref{eq3ApeA}, the respective coefficients can be obtained from~\eqref{eq4} to~\eqref{eq6} by substituting
$\overline{\gamma}$, $\mu$, $m$, and $\kappa$ by $ N_{\mathrm{B}} \overline{\gamma}_{\mathrm{B}}$, 
$N_{\mathrm{B}} \mu_{\mathrm{B}}$, $N_{\mathrm{B}} m_{\mathrm{B}}$, and $\kappa_{\mathrm{B}}$, respectively.

In what follows, we derive the CDF of $\gamma_\mathrm{B}$ for $m_\mathrm{B}<\mu_\mathrm{B}$ and $m_\mathrm{B}\geq \mu_\mathrm{B}$.
\begin{itemize}
  \item \textbf{If} $m_\mathrm{B}<\mu_\mathrm{B}$
\end{itemize}
Substituting~\eqref{eq2ApeA} into~\eqref{eq1ApeA} and by applying the binomial expansion twice~\cite[Eq.~(1.111)]{Gradshteyn}, we get
\begin{align}\label{eq4ApeA}
F_{\gamma_{\mathrm{B}}}(\gamma_\mathrm{B})=&\sum_{k=0}^{N_\mathrm{A}}(-1)^k\binom{N_\mathrm{A}}{k} \sum_{c=0}^{k}\binom{k}{c}  \underset{T_1}{\underbrace{\Biggr( \sum_{j=1}^{\eta_\mathrm{B} } \left ( \tfrac{\gamma_\mathrm{B} }{\Delta_1^\mathrm{B}} \right )^{\eta_\mathrm{B}-j} }}  \nonumber \\ & \times\underset{T_1}{\underbrace{ \tfrac{\exp\left ( -\frac{\gamma_\mathrm{B} }{\Delta_1^\mathrm{B}} \right )}{\left ( \eta_\mathrm{B}-j \right )!} \sum_{z=\eta_\mathrm{B}+1-j}^{\eta_\mathrm{B}}A_{1,\eta_\mathrm{B}+1-z}^\mathrm{B}\Biggr)^{k-c} }} \underset{T_2}{\underbrace{\Biggr( \sum_{j=1}^{\nu_\mathrm{B}} }}\nonumber \\ & \times   \underset{T_2}{\underbrace{ \left ( \tfrac{\gamma_\mathrm{B}}{\Delta_2^\mathrm{B}} \right )^{\nu_\mathrm{B}-j } \tfrac{\exp\left ( -\frac{\gamma_\mathrm{B} }{\Delta_2^\mathrm{B}} \right )}{\left ( \nu_\mathrm{B}-j  \right )!} \sum_{z=\nu_\mathrm{B}+1-j}^{\nu} A_{2,\nu_\mathrm{B}+1-z}^\mathrm{B} \Biggr)^{c}}}.
\end{align}
Next, by using the multinomial theorem~\cite[Eq.~(24.1.2)]{Abramowitz} for both terms $T_1$ and $T_2$, and after some mathematical manipulations, the CDF of $\gamma_\mathrm{B}$ can be formulated as in~\eqref{eq12}, which concludes the proof. 
\begin{itemize}
  \item \textbf{If} $m_\mathrm{B}\geq \mu_\mathrm{B}$
\end{itemize}
Replacing~\eqref{eq3ApeA} into~\eqref{eq1ApeA} and by applying the binomial expansion~\cite[Eq.~(1.111)]{Gradshteyn}, it follows that

\begin{align}\label{eq5ApeA}
F_{\gamma_{\mathrm{B}}}(\gamma_\mathrm{B})=&\sum_{k=0}^{N_\mathrm{A}}(-1)^k\binom{N_\mathrm{A}}{k} \underset{T_3}{\underbrace{\Biggr( \sum_{j=1}^{\nu_\mathrm{B} } \left ( \tfrac{\gamma_\mathrm{B} }{\Delta_2^\mathrm{B}} \right )^{\nu_\mathrm{B}-j} }}  \nonumber \\ & \times\underset{T_3}{\underbrace{ \tfrac{\exp\left ( -\frac{\gamma_\mathrm{B} }{\Delta_2^\mathrm{B}} \right )}{\left ( \nu_\mathrm{B}-j \right )!} \sum_{z=\beta_\mathrm{B}+1-\mathcal{T}(j-1)}^{\beta_\mathrm{B}}B_{\beta_\mathrm{B}-z}^\mathrm{B} \Biggr)^{k} }}.
\end{align}
Again, by using the multinomial expansion~\cite[Eq.~(24.1.2)]{Abramowitz} into~$T_3$, 
and after some algebraic manipulations, the CDF of $\gamma_\mathrm{B}$ can be expressed as in~\eqref{eq13}. This completes the proof. 

%%%%%%%%%%%%%%%%%%%%%%%%%%%%%%%%
%%%%%%%%%%%%%%%%%%%%%%%%%%%%%%%%
%APPENDIX B
%\newpage
\section{ Proofs of Proposition~\ref{Propo3}}
\label{ap:SOPs}
\subsection{$\mathrm{SOP}$}
\begin{itemize}
  \item \textbf{If} $m_i<\mu_i$ for $i \in \left \{ \mathrm{B},\mathrm{E} \right \}$
\end{itemize}
Substituting~\eqref{eq:pdfV1Eve} and~\eqref{eq12} into~\eqref{eq17}, we can obtain
\begin{align}\label{eq1ApeB}
 \text{SOP}=&\sum_{k=0}^{N_\mathrm{A}}(-1)^k\binom{N_\mathrm{A}}{k} \sum_{c=0}^{k}\binom{k}{c} \sum_{\rho\left (k-c,\eta_\mathrm{B}  \right ) }^{ } \frac{(k-c)!}{s_1!\cdots s_{\eta_\mathrm{B}}!}  \nonumber \\ & \times
 \left [ \prod_{t=1}^{\eta_\mathrm{B}} \left (  \frac{\left ( \frac{1 }{\Delta_1^\mathrm{B}} \right )^{\eta_\mathrm{B}-t} }{(\eta_\mathrm{B}-t)!}   \sum_{z=\eta_\mathrm{B}+1-t}^{\eta_\mathrm{B}} A_{1,\eta_\mathrm{B}+1-z}^\mathrm{B}   \right )^{s_t}\right] 
\nonumber \\ & \times   \sum_{\rho\left (c,\nu_\mathrm{B}  \right ) }^{ } \frac{c!}{p_1!\cdots p_{\nu_\mathrm{B}}!} \left[ \prod_{q=1}^{\nu_\mathrm{B}}\left(   \frac{\left ( \tfrac{1 }{\Delta_2^\mathrm{B}} \right )^{\nu_\mathrm{B}-q}  }{(\nu_\mathrm{B}-q)!}  \right. \right.    \nonumber \\  & \times \left. \left.   \sum_{z=\nu_\mathrm{B}+1-q}^{\nu_\mathrm{B}} A_{2,\nu_\mathrm{B}+1-z}^\mathrm{B} \right)^{p_q}\right]  \exp\left ( -  \tfrac{\left ( \tau-1  \right )(k-c)}{\Delta_1^\mathrm{B}}\right ) 
\nonumber \\ & \times 
 \exp\left ( -  \tfrac{\left ( \tau-1  \right )c}{\Delta_2^\mathrm{B}}\right ) \Biggr[  \sum_{j=1}^{\eta_\mathrm{E}}\tfrac{A_{1,j}^{\mathrm{E}}}{\left (\eta_\mathrm{E}-j  \right )!} \left ( \tfrac{\eta_\mathrm{E}-j+1}{\omega_{A1}^{\mathrm{E}}} \right )^{\eta_\mathrm{E}-j+1} 
  \nonumber \\ & \times \underset{I_1}{\underbrace{
 \int_{0}^{\infty} \left ( \tau \gamma_\mathrm{E}+\tau-1 \right )^{\sum_{t=1 }^{\eta_\mathrm{B}}(\eta_\mathrm{B}-t)s_t+\sum_{q=1 }^{\nu_\mathrm{B} }(\nu_\mathrm{B}-q)p_q} }}  \nonumber \\ & \times  \underset{I_1}{\underbrace{ \gamma_\mathrm{E}^{\eta_\mathrm{E}-j }\exp\left ( -\gamma_{\mathrm{E}}  \left ( \tfrac{\tau\left ( k-c \right )}{\Delta_1^{\mathrm{B}}}+\tfrac{\tau c}{\Delta_2^{\mathrm{B}}} +\tfrac{\eta_\mathrm{E}-j+1}{\omega_{A1}^{\mathrm{E}} }\right ) \right ) d\gamma_{\mathrm{E}} }}  \nonumber \\ & +  \sum_{j=1}^{\nu_\mathrm{E}}\tfrac{A_{2,j}^{\mathrm{E}}}{\left ( \nu_\mathrm{E}-j \right )!} \left ( \tfrac{\nu_\mathrm{E}-j+1}{\omega_{A2}^{\mathrm{E}}} \right )^{\nu_\mathrm{E}-j+1}\underset{I_2}{\underbrace{\int_{0}^{\infty}  \gamma_\mathrm{E}^{\nu_\mathrm{E}-j}} } \nonumber \\ & \times \underset{I_2}{\underbrace{
 \left ( \tau \gamma_\mathrm{E}+\tau-1 \right )^{\sum_{t=1 }^{\eta_\mathrm{B}}(\eta_\mathrm{B}-t)s_t+\sum_{q=1 }^{\nu_\mathrm{B} }(\nu_\mathrm{B}-q)p_q}}} \nonumber \\ & \times \underset{I_2}{\underbrace{ \exp\left ( -\gamma_{\mathrm{E}}  \left ( \tfrac{\tau\left ( k-c \right )}{\Delta_1^{\mathrm{B}}}+\tfrac{\tau c}{\Delta_2^{\mathrm{B}}} +\tfrac{\nu_\mathrm{E}-j+1}{\omega_{A1}^{\mathrm{E}} }\right ) \right ) d\gamma_{\mathrm{E}} }}  \Biggr].
 \end{align}
 %%%%
 %%%%
%
Here, with the aid of ~\cite[Eq.~(1.111)]{Gradshteyn}, we expand the binomial terms in $I_1$ and $I_2$. Then, by using~\cite[Eq.~(3.351.2)]{Gradshteyn} to solve the integrals in $I_1$ and $I_2$, 
the respective SOP can be expressed as in~\eqref{eq:SOPExactV1}, which concludes the proof.
\vspace{-2mm}
%%%%%%%%%%%%%%%%%%%%%%%%%%%%%%%%%%%%%%%%% SOP PARA VERSION 2
\begin{itemize}
  \item \textbf{If} $m_i\geq \mu_i$ for $i \in \left \{ \mathrm{B},\mathrm{E} \right \}$
\end{itemize}
Substituting~\eqref{eq:pdfV2Eve} and~\eqref{eq13} into~\eqref{eq17}, we get

\begin{align}\label{eq2ApeB}
 \text{SOP}=&\sum_{k=0}^{N_\mathrm{A}}(-1)^k\binom{N_\mathrm{A}}{k} \exp\left ( -k \tfrac{  \left (\tau-1   \right )}{\Delta_2^\mathrm{B}} \right ) \sum_{\rho\left (k,\nu_\mathrm{B}  \right ) }^{ } \frac{k!}{s_1!\cdots s_{\nu_\mathrm{B}}!} \nonumber \\ & \times \left[ \prod_{t=1}^{\nu_\mathrm{B}} \left( \frac{\left ( \tfrac{1 }{\Delta_2^\mathrm{B}} \right )^{\nu_\mathrm{B}-t} }{(\nu_\mathrm{B}-t)!} \sum_{z=\beta_\mathrm{B}+1-\mathcal{T}(j-1)}^{\beta_\mathrm{B}}B_{\beta_\mathrm{B}-z}^\mathrm{B}\right)^{s_t}\right] \nonumber \\ & \times 
 \sum_{j=0}^{\beta_\mathrm{E}}\tfrac{B_{j}^\mathrm{E} }{\nu_\mathrm{E}-j-1} \left ( \tfrac{\nu_\mathrm{E}-j}{\omega_{B}^{\mathrm{E}} } \right )^{\nu_\mathrm{E}-j} \underset{I_3}{\underbrace{ \int_{0}^{\infty} \gamma_\mathrm{E}^{\nu_\mathrm{E}-j-1} \exp\left ( -  \tfrac{\gamma_\mathrm{E}k \tau}{\Delta_2^\mathrm{B}} \right )}} \nonumber \\ & \times  \underset{I_3}{\underbrace{ \left ( \tau \gamma_\mathrm{E}+\tau-1 \right )^{\sum_{t=1 }^{\nu_\mathrm{B}}(\nu_\mathrm{B}-t)s_t}  \exp\left ( -\gamma_\mathrm{E} \left ( \tfrac{\nu_\mathrm{E}-j}{\omega_{B}^{\mathrm{E}}} \right ) \right ) d\gamma_\mathrm{E}}}.
 \end{align}
Again, by using~\cite[Eq.~(1.111)~-~Eq.~(3.351.2)]{Gradshteyn}, we expand the binomial term in $I_3$. Next, with the aid of ~\cite[Eq.~(3.351.3)]{Gradshteyn} to solve $I_3$, the SOP can be formulated as in~\eqref{eq:SOPExactV2}. This concludes the proof.

%\vspace{-4mm}

%%%%%%%%%%%%%%%%%%%%%%%%%%%%%%%%
%%%%%%%%%%%%%%%%%%%%%%%%%%%%%%%%
%APPENDIX 
%\newpage
\section{Proof of Proposition~\ref{Propo4}}
\label{ap:SOPAsintotas}

%%%%%%%%% apendice C
%%%%%%%%%%%%%%%%%%%%%%%%%%%%%%%
\subsection{$\mathrm{SOP}^{\infty}$}
\subsubsection{Keeping $\overline{\gamma}_\mathrm{E}$ Fixed and $\overline{\gamma}_\mathrm{B}\rightarrow \infty$ }
Firstly, by using the asymptotic-matching method proposed in~\cite{perim}, the CDF of a $\kappa$-$\mu$ shadowed RV given in~\eqref{eq:cdfV1} and~\eqref{eq:cdfV2} can be approximated by a gamma distribution with CDF 
\begin{align}
\label{apenAsin:1}
F_\mathrm{X}^{\mathrm{G}}(x)\approx \frac{ \gamma(\alpha,\frac{x}{\lambda})}{\Gamma\left ( \alpha \right )},
\end{align} 
where the shape parameters $\alpha$ and $\lambda$ are given in terms of the $\kappa$-$\mu$ shadowed fading parameters as $\alpha=\mu$ and $\lambda=\frac{\overline{\gamma} }{(1+\kappa )\mu} \left (\frac{ \left ( m+\kappa \mu \right )^{m} }{m^m}  \right )^{\frac{1}{\mu}}$. Now, in order to asymptotically approximate $F_{\gamma_1 }(\gamma_\mathrm{B})$, we use the following relationship $\gamma \left (a,x \right )\simeq x^s/s   $ as $x\rightarrow 0$ in~\eqref{apenAsin:1}, and then replacing $\overline{\gamma}\rightarrow N_{\mathrm{B}} \overline{\gamma}_{\mathrm{B}}$, $\mu\rightarrow N_{\mathrm{B}} \mu_{\mathrm{B}}$, $m\rightarrow N_{\mathrm{B}} m_{\mathrm{B}}$, and $\kappa\rightarrow  \kappa_{\mathrm{B}}$. Next, by plugging the resulting asymptotic $F_{\gamma_1 }(\gamma_\mathrm{B})$ in~\eqref{eq1ApeA}, this yields

\begin{align}
\label{apenAsin:3}
F_\mathrm{B}(\gamma_\mathrm{B})\simeq \left ( \frac{m_\mathrm{B}^{N_\mathrm{B}m_\mathrm{B}} \left ( 1+\kappa_\mathrm{B} \right )^{N_\mathrm{B}\mu_\mathrm{B} } \mu_\mathrm{B}^{N_\mathrm{B}\mu_\mathrm{B}-1} \gamma_\mathrm{B}^{N_\mathrm{B}\mu_\mathrm{B} } }{N_\mathrm{B} \overline{\gamma}_\mathrm{B}^{N_\mathrm{B}\mu_\mathrm{B}} \left ( m_\mathrm{B}+\kappa_\mathrm{B}\mu_\mathrm{B} \right )^{N_\mathrm{B}m_\mathrm{B}}\Gamma\left ( N_\mathrm{B} \mu_\mathrm{B} \right ) } \right )^{N_\mathrm{A}}.
\end{align}
Combining~\eqref{apenAsin:3} with~~\cite[Eq.~(4)]{paris} 
with the respective substitutions into~\eqref{eq18}, it follows that
\begin{align}\label{apenAsin:4}
 \mathrm{SOP}^{\infty}\simeq&
  \left ( \frac{m_\mathrm{B}^{N_\mathrm{B}m_\mathrm{B}} \left ( 1+\kappa_\mathrm{B} \right )^{N_\mathrm{B}\mu_\mathrm{B} } \mu_\mathrm{B}^{N_\mathrm{B}\mu_\mathrm{B}-1} \tau^{N_\mathrm{B}\mu_\mathrm{B} } }{N_\mathrm{B} \overline{\gamma}_\mathrm{B}^{N_\mathrm{B}\mu_\mathrm{B}} \left ( m_\mathrm{B}+\kappa_\mathrm{B}\mu_\mathrm{B} \right )^{N_\mathrm{B}m_\mathrm{B}}\Gamma\left ( N_\mathrm{B} \mu_\mathrm{B} \right ) } \right )^{N_\mathrm{A}} \nonumber \\ & \times 
  \frac{ \mu_\mathrm{E} ^{N_\mathrm{E}\mu_\mathrm{E}} m_\mathrm{E} ^{N_\mathrm{E}m_\mathrm{E}} \left ( 1+\kappa_\mathrm{E} \right )^{N_\mathrm{E}\mu_\mathrm{E}} }{\Gamma\left ( N_\mathrm{E} \mu_\mathrm{E} \right ) \overline{\gamma}_\mathrm{E}^{N_\mathrm{E}\mu_\mathrm{E}} \left ( \mu_\mathrm{E} \kappa_\mathrm{E} +m_\mathrm{E}\right )^{N_\mathrm{E} m_\mathrm{E} }}  \nonumber \\ & \times \underset{I_4}{\underbrace{ \int_{0}^{\infty} \gamma_\mathrm{E}^{N_\mathrm{E}\mu_\mathrm{E}+N_\mathrm{A}N_\mathrm{B}\mu_\mathrm{B}-1}
 \exp\left ( - \frac{\gamma_\mathrm{E}\mu_\mathrm{E}\left ( 1+\kappa_\mathrm{E} \right )}{\overline{\gamma}_\mathrm{E}} \right )}} \nonumber \\ & \times \underset{I_4}{\underbrace{{ }_1F_1\left (N_\mathrm{E}m_\mathrm{E} ,N_\mathrm{E}\mu_\mathrm{E},\frac{\gamma_\mathrm{E} \kappa_\mathrm{E}\mu_\mathrm{E}^{2} \left ( 1+\kappa_\mathrm{E} \right )}{\overline{\gamma}_\mathrm{E}\left ( m_\mathrm{E}+\kappa_\mathrm{E} \mu_\mathrm{E} \right )} \right ) d\gamma_\mathrm{E}}}.
 \end{align}
Finally, using ~\cite[Eq.~(7.522.9)]{Gradshteyn} yields the desired result.

%%%%%%%%%%%%%%%%%%%%%%%%%%%%%%%%
%%%%%%%%%%%%%%%%%%%%%%%%%%%%%%%%
%APPENDIX D
%\newpage
\section{Proof of Proposition~\ref{Propo5}}
\label{ap:ASC}
\begin{itemize}
  \item  \textbf{If} $m_\mathrm{i}<\mu_i$ for $i \in \left \{ \mathrm{B},\mathrm{E} \right \}$
\end{itemize}
Inserting~\eqref{eq12} in~\eqref{eq23}, the result is
\begin{align}\label{apenD:1}
\overline{C}_\mathrm{B}=&
  \frac{1}{\ln 2} \sum_{k=1}^{N_\mathrm{A}}(-1)^{k+1}\binom{N_\mathrm{A}}{k} \sum_{c=0}^{k}\binom{k}{c} \sum_{\rho\left (c,\nu_\mathrm{B}  \right ) }^{ } \frac{c!}{p_1!\cdots p_{\nu_\mathrm{B}}!} \nonumber \\ & \times \left[ \prod_{q=1}^{\nu_\mathrm{B}}\left(   \frac{\left ( \tfrac{1 }{\Delta_2^\mathrm{B}} \right )^{\nu_\mathrm{B}-q}  }{(\nu_\mathrm{B}-q)!} \sum_{z=\nu_\mathrm{B}+1-q}^{\nu_\mathrm{B}} A_{2,\nu_\mathrm{B}+1-z}^\mathrm{B} \right)^{p_q}\right]
\nonumber \\ & \times   \sum_{\rho\left (k-c,\eta_\mathrm{B}  \right ) }^{ } \frac{(k-c)!}{s_1!\cdots s_{\eta_\mathrm{B}}!} \left [ \prod_{t=1}^{\eta_\mathrm{B}}  \left (  \frac{\left ( \frac{1 }{\Delta_1^\mathrm{B}} \right )^{\eta_\mathrm{B}-t} }{(\eta_\mathrm{B}-t)!}   \right. \right.  \nonumber \\ & \times\left. \left.     \sum_{z=\eta_\mathrm{B}+1-t}^{\eta_\mathrm{B}} A_{1,\eta_\mathrm{B}+1-z}^\mathrm{B}   \right )^{s_t}\right] \underset{I_5}{\underbrace{ \int_{0}^{\infty} \tfrac{\exp\left (-\tfrac{\gamma_\mathrm{E}  c}{\Delta_2^\mathrm{B}}  \right )}{\left ( 1+\gamma_\mathrm{E} \right )}}}
\nonumber \\ & \times \underset{I_5}{\underbrace{\exp\left (-\gamma_\mathrm{E} \left ( \tfrac{k-c}{\Delta_1^\mathrm{B}} \right ) \right )  \gamma_\mathrm{E}^{\sum_{t=1 }^{\eta_\mathrm{B}}(\eta_\mathrm{B}-t)s_t+\sum_{q=1 }^{\nu_\mathrm{B} }(\nu_\mathrm{B}-q)p_q}d\gamma_\mathrm{E}}}.
\end{align}
Employing~\cite[Eq.~(3.353.5)]{Gradshteyn}, the integral in $I_5$ can be expressed in simple exact closed-form. Then, by substituting~\eqref{eq12} together with~\eqref{eq:cdfV1Eve} into~\eqref{eq24}, it follows that
\begin{align}\label{apenD:2}
\mathcal{L}\left ( \overline{\gamma}_\mathrm{B}, \overline{\gamma}_\mathrm{E}\right )=& 
  \sum_{k=1}^{N_\mathrm{A}}(-1)^{k+1}\binom{N_\mathrm{A}}{k} \sum_{c=0}^{k}\binom{k}{c} \sum_{\rho\left (c,\nu_\mathrm{B}  \right ) }^{ } \frac{c!}{p_1!\cdots p_{\nu_\mathrm{B}}!} \nonumber \\ & \times \left[ \prod_{q=1}^{\nu_\mathrm{B}}\left(   \frac{\left ( \tfrac{1 }{\Delta_2^\mathrm{B}} \right )^{\nu_\mathrm{B}-q}  }{(\nu_\mathrm{B}-q)!} \sum_{z=\nu_\mathrm{B}+1-q}^{\nu_\mathrm{B}} A_{2,\nu_\mathrm{B}+1-z}^\mathrm{B} \right)^{p_q}\right]
\nonumber \\ & \times  \frac{1}{\ln 2} \sum_{\rho\left (k-c,\eta_\mathrm{B}  \right ) }^{ } \frac{(k-c)!}{s_1!\cdots s_{\eta_\mathrm{B}}!}\left [ \prod_{t=1}^{\eta_\mathrm{B}} \left (  \frac{\left ( \frac{1 }{\Delta_1^\mathrm{B}} \right )^{\eta_\mathrm{B}-t} }{(\eta_\mathrm{B}-t)!}   \right. \right.  \nonumber \\ & \times\left. \left.   \sum_{z=\eta_\mathrm{B}+1-t}^{\eta_\mathrm{B}} A_{1,\eta_\mathrm{B}+1-z}^\mathrm{B}   \right )^{s_t}\right] \Biggl( \sum_{j=1}^{\eta_\mathrm{E}}A_{1,j }^{\mathrm{E}} \sum_{r=0}^{\eta_\mathrm{E}-j}\frac{1}{r!}    \nonumber \\ & \times   \left ( \frac{1}{\Delta_1^{\mathrm{E}}} \right )^{r} \underset{I_6}{\underbrace{\int_{0}^{\infty} \exp\left ( -\gamma_\mathrm{E} \left (\tfrac{k-c}{\Delta_1^\mathrm{B}}+\tfrac{c}{\Delta_2^\mathrm{B}} +\tfrac{1}{\Delta_1^{\mathrm{E}}} \right ) \right )}} \nonumber \\ & \times \underset{I_6}{\underbrace{\frac{1}{\left ( 1+\gamma_\mathrm{E} \right )}\gamma_\mathrm{E}^{r+\sum_{t=1 }^{\eta_\mathrm{B}}(\eta_\mathrm{B}-t)s_t+\sum_{q=1 }^{\nu_\mathrm{B} }(\nu_\mathrm{B}-q)p_q} d\gamma_\mathrm{E}}} \nonumber \\ & + \sum_{j=1}^{\nu_\mathrm{E}}A_{2,j}^{\mathrm{E}}  \sum_{r=0}^{\nu_\mathrm{E}-j}\frac{1}{r!}\left ( \frac{1}{\Delta_2^{\mathrm{E}}} \right )^{r}\underset{I_7}{\underbrace{ \int_{0}^{\infty} \exp\left (-\frac{\gamma_\mathrm{E}}{\Delta_2^{\mathrm{E}}} \right )}}  \nonumber \\ & \times \underset{I_7}{\underbrace{ \exp\left ( -\gamma_\mathrm{E} \left (\frac{k-c}{\Delta_1^\mathrm{B}}+\frac{c}{\Delta_2^\mathrm{B}} \right ) \right ) \frac{1}{\left ( 1+\gamma_\mathrm{E} \right )}   }} \nonumber \\ & \times   \underset{I_7}{\underbrace{\gamma_\mathrm{E}^{r+\sum_{t=1 }^{\eta_\mathrm{B}}(\eta_\mathrm{B}-t)s_t+\sum_{q=1 }^{\nu_\mathrm{B} }(\nu_\mathrm{B}-q)p_q} d\gamma_\mathrm{E}}}.\Biggr)
  \end{align}
Again, by using~\cite[Eq.~(3.353.5)]{Gradshteyn}, both $I_6$ and $I_7$ can be evaluated en closed-form fashion. Then, by combining~\eqref{apenD:1} and~\eqref{apenD:2}, the $\overline{C}_\mathrm{S}$ can be expressed as in~\eqref{ascV1}. This completes the proof.

\begin{itemize}
  \item $\mathrm{If}$ $m_i\geq \mu_i$ for $i \in \left \{ \mathrm{B},\mathrm{E} \right \}$
\end{itemize}
Plugging~\eqref{eq13} in~\eqref{eq23}, we have 

\begin{align}\label{apenD:3}
\overline{C}_\mathrm{B}=&
  \frac{1}{\ln 2} \sum_{k=1}^{N_\mathrm{A}}(-1)^{k+1}\binom{N_\mathrm{A}}{k} \sum_{\rho\left (k,\nu_\mathrm{B}  \right ) }^{ } \frac{k!}{s_1!\cdots s_{\nu_\mathrm{B}}!} \nonumber \\ & \times \left[ \prod_{t=1}^{\nu_\mathrm{B}} \left( \frac{\left ( \tfrac{1 }{\Delta_2^\mathrm{B}} \right )^{\nu_\mathrm{B}-t} }{(\nu_\mathrm{B}-t)!} \sum_{z=\beta_\mathrm{B}+1-\mathcal{T}(j-1)}^{\beta_\mathrm{B}}B_{\beta_\mathrm{B}-z}^\mathrm{B}\right)^{s_t}\right]\nonumber \\ & \times \underset{I_7}{\underbrace{ \int_{0}^{\infty} \frac{1}{\left ( 1+\gamma_\mathrm{E} \right )} \exp\left (-\gamma_\mathrm{E} \left (  \tfrac{k}{\Delta_2^\mathrm{B}}\right ) \right ) \gamma_\mathrm{E}^{\sum_{t=1 }^{\nu_\mathrm{B}}(\nu_\mathrm{B}-t)s_t}d\gamma_\mathrm{E}}}.
  \end{align}
With the aid of~\cite[Eq.~(3.353.5)]{Gradshteyn},~$I_7$ can be evaluated in exact closed-form. Next, inserting~\eqref{eq:cdfV2Eve} and~\eqref{eq13} into~\eqref{eq24} yields 
\begin{align}\label{apenD:4}
\mathcal{L}\left ( \overline{\gamma}_\mathrm{B}, \overline{\gamma}_\mathrm{E}\right )=&
  \frac{1}{\ln 2} \sum_{k=1}^{N_\mathrm{A}}(-1)^{k+1}\binom{N_\mathrm{A}}{k} \sum_{\rho\left (k,\nu_\mathrm{B}  \right ) }^{ } \frac{k!}{s_1!\cdots s_{\nu_\mathrm{B}}!} \nonumber \\ & \times \left[ \prod_{t=1}^{\nu_\mathrm{B}} \left( \frac{\left ( \tfrac{1 }{\Delta_2^\mathrm{B}} \right )^{\nu_\mathrm{B}-t} }{(\nu_\mathrm{B}-t)!} \sum_{z=\beta_\mathrm{B}+1-\mathcal{T}(j-1)}^{\beta_\mathrm{B}}B_{\beta_\mathrm{B}-z}^\mathrm{B}\right)^{s_t}\right]\nonumber \\ & \times
  \sum_{j=0}^{\beta_\mathrm{E}}B_{j}^\mathrm{E}\sum_{r=0}^{\nu_\mathrm{E}-j-1}\frac{1}{r!}\left ( \frac{1}{\Delta_2^\mathrm{E}} \right )^r \underset{I_8}{\underbrace{\int_{0}^{\infty}\exp\left ( - \frac{\gamma_\mathrm{E} }{\Delta_2^\mathrm{E}} \right )}} \nonumber \\ & \times 
  \underset{I_8}{\underbrace{  \frac{1}{\left ( 1+\gamma_\mathrm{E} \right )} \exp\left (-  \frac{\gamma_\mathrm{E}  k}{\Delta_2^\mathrm{B}}\right )  \gamma_\mathrm{E}^{\sum_{t=1 }^{\nu_\mathrm{B}}(\nu_\mathrm{B}-t)s_t+r}d\gamma_\mathrm{E}}}.
  \end{align}
Similar to the evaluation of $I_7$, the identity~\cite[Eq.~(3.353.5)]{Gradshteyn} is used to calculate $I_8$. Finally, by combining~\eqref{apenD:3} and~\eqref{apenD:4}, the $\overline{C}_\mathrm{S}$ can be formulated as in~\eqref{ascV2}, which concludes the proof.

%%%%%%%%%%%%%%%%%%%%%%%%%%%%%%%%
%%%%%%%%%%%%%%%%%%%%%%%%%%%%%%%%
%APPENDIX E
%\newpage
\section{Proof of Proposition~\ref{Propo6}}
\label{ap:ascAsympt}
\begin{itemize}
  \item  \textbf{If} $m_i<\mu_i$ for $i \in \left \{ \mathrm{B},\mathrm{E} \right \}$
\end{itemize}
Inserting~\eqref{eq:cdfV1Eve} in~\eqref{eq28}, this yields
\begin{align}\label{apenE:1}
\overline{C}_\mathrm{E}=&
  \frac{1}{\ln 2}  \Biggr( \sum_{j=1}^{\eta_\mathrm{E}}A_{1,j }^{\mathrm{E}}\sum_{r=0}^{\eta_\mathrm{E}-j}\frac{1}{r!} \left ( \frac{1}{\Delta_1^{\mathrm{E}}} \right )^{r} \underset{I_{9}}{\underbrace{ \int_{0}^{\infty} \exp\left ( -\frac{\gamma_\mathrm{E}}{\Delta_1^{\mathrm{E}}} \right ) }} \nonumber \\
 & \times \underset{I_{9}}{\underbrace{\frac{\gamma_\mathrm{E}^r}{\left ( 1+\gamma_\mathrm{E} \right )}d\gamma_\mathrm{E} }} +\sum_{j=1}^{\nu_\mathrm{E}}A_{2,j}^{\mathrm{E}}\sum_{r=0}^{\nu_\mathrm{E}-j}\frac{1}{r!}\left ( \frac{1}{\Delta_2^{\mathrm{E}}} \right )^{r}\nonumber \\ & \times \underset{I_{10}}{\underbrace{ \int_{0}^{\infty} \exp\left ( -\frac{\gamma_\mathrm{E}}{\Delta_2^{\mathrm{E}}} \right ) \frac{\gamma_\mathrm{E}^r}{\left ( 1+\gamma_\mathrm{E} \right )}d\gamma_\mathrm{E}}}. \Biggr)
  \end{align}
Recalling~\cite[Eq.~(3.353.5)]{Gradshteyn}, integrals $I_9$ and $I_{10}$ can be computed in exact-closed fashion. Hence, an approximation of $\overline{C}_\mathrm{B}^{\overline{\gamma}_\mathrm{B}\rightarrow{\infty}}$ can be formulated as in~\cite{yilmaz} by 
\begin{equation}
\label{apenE:2}
\overline{C}_\mathrm{B}^{\overline{\gamma}_\mathrm{B}\rightarrow{\infty}}\approx \log_2(\overline\gamma_{\mathrm{T}}) + \log_2(e)\left.\frac{d\mathcal{M}(n)}{dn}\right\rvert_{n=0},
\end{equation}
where $\overline\gamma_{\mathrm{T}}=N_\mathrm{B}\overline\gamma_{\mathrm{B}}$ is the total average SNR at Bob, and $\mathcal{M}(n)\triangleq \tfrac{\mathbb{E} \left [ \gamma_\mathrm{B}^n\right ]}{\overline{\gamma}_\mathrm{B}^n}$ denotes the normalized moments of the RV $\gamma_\mathrm{B}$. From~\eqref{eq14}, $\mathcal{M}(n)$ can be expressed as
\begin{align}
\label{apenE:3}
\mathcal{M}(n)=& \frac{1}{\overline{\gamma}_\mathrm{B}^n}
 \sum_{k=1}^{N_\mathrm{A}}(-1)^k\binom{N_\mathrm{A}}{k} \sum_{c=0}^{k}\binom{k}{c} \sum_{\rho\left (c,\nu_\mathrm{B}  \right ) }^{ } \frac{c!}{p_1!\cdots p_{\nu_\mathrm{B}}!} \nonumber \\ & \times
 \left[ \prod_{q=1}^{\nu_\mathrm{B}}\left(   \frac{\left ( \tfrac{1 }{\Delta_2^\mathrm{B}} \right )^{\nu_\mathrm{B}-q}  }{(\nu_\mathrm{B}-q)!} \sum_{z=\nu_\mathrm{B}+1-q}^{\nu_\mathrm{B}} A_{2,\nu_\mathrm{B}+1-z}^\mathrm{B} \right)^{p_q}\right]
\nonumber \\ & \times   \sum_{\rho\left (k-c,\eta_\mathrm{B}  \right ) }^{ } \frac{(k-c)!}{s_1!\cdots s_{\eta_\mathrm{B}}!}   \left [ \prod_{t=1}^{\eta_\mathrm{B}} \left (  \frac{\left ( \frac{1 }{\Delta_1^\mathrm{B}} \right )^{\eta_\mathrm{B}-t} }{(\eta_\mathrm{B}-t)!}  \right. \right.  \nonumber \\ & \times\left. \left.  \sum_{z=\eta_\mathrm{B}+1-t}^{\eta_\mathrm{B}} A_{1,\eta_\mathrm{B}+1-z}^\mathrm{B}   \right )^{s_t}\right] \underset{I_{11}}{\underbrace{\int_0^{\infty} \tfrac{\exp\left ( -\gamma_\mathrm{B}     \left ( \frac{k-c}{\Delta_1^\mathrm{B}} \right ) \right )}{\Delta_1^\mathrm{B}\Delta_2^\mathrm{B}} }}\nonumber \\& \times \underset{I_{11}}{\underbrace{ \exp\left ( -   \tfrac{\gamma_\mathrm{B}  c}{\Delta_2^\mathrm{B}}  \right )
\gamma_\mathrm{B}^{-1+n+\sum_{t=1 }^{\eta_\mathrm{B}}(\eta_\mathrm{B}-t)s_t+\sum_{q=1 }^{\nu_\mathrm{B} }(\nu_\mathrm{B}-q)p_q} }}\nonumber \\ &\times \underset{I_{11}}{\underbrace{
\Biggr(\Delta_1^\mathrm{B}\Delta_2^\mathrm{B} \left ( \sum_{t=1 }^{\eta_\mathrm{B}}(\eta_\mathrm{B}-t)s_t+\sum_{q=1 }^{\nu_\mathrm{B} }(\nu_\mathrm{B}-q)p_q  \right ) }} \nonumber \\ & \underset{I_{11}}{\underbrace{-\gamma_\mathrm{B} \left ( \Delta_1^\mathrm{B} c-\Delta_2^\mathrm{B}\left ( c-k \right ) \right ) d\gamma_\mathrm{B}  }}\Biggr). 
\end{align}
Expanding the integral term in~\eqref{apenE:3} and making use of~\cite[Eq.~(3.351.3)]{Gradshteyn}, $I_{11}$ can be evaluated in a simple form. Next, taking the derivative of the the resulting expression with respect to
$n$, and setting $n$ equal to zero, $\overline{C}_\mathrm{B}^{\overline{\gamma}_\mathrm{B}\rightarrow{\infty}}$ can be formulated in closed-form fashion. Finally, by replacing $\overline{C}_\mathrm{B}^{\overline{\gamma}_\mathrm{B}\rightarrow{\infty}}$ and~\eqref{apenE:1} into~\eqref{eq25}, and after some manipulations, $\overline{C}_\mathrm{S}^{\infty}$ is attained as in~\eqref{ascAsympV1}. This completes the proof.

\begin{itemize}
  \item $\mathrm{If}$ $m_i\geq \mu_i$ for $i \in \left \{ \mathrm{B},\mathrm{E} \right \}$
\end{itemize}
Substituting~\eqref{eq:cdfV2Eve} into~\eqref{eq28}, we obtain
\begin{align}\label{apenE:4}
\overline{C}_\mathrm{E}=&
  \frac{1}{\ln 2}   \sum_{j=0}^{\beta_\mathrm{E}}B_{j}^\mathrm{E}\sum_{r=0}^{\nu_\mathrm{E}-j-1}\frac{1}{r!}\left ( \frac{1}{\Delta_2^\mathrm{E}} \right )^r\underset{I_{12}}{\underbrace{\int_0^{\infty}\exp\left ( - \frac{\gamma_\mathrm{E}}{\Delta_2^\mathrm{E}} \right )}}\nonumber \\ & \times \underset{I_{12}}{\underbrace{  \frac{\gamma_\mathrm{E}^r}{\left ( 1+\gamma_\mathrm{E} \right )}d\gamma_\mathrm{E}}}. 
  \end{align}
Again, making use of~\cite[Eq.~(3.353.5)]{Gradshteyn}, $I_{12}$ is computed in a closed-form solution. Here, following similar steps to obtain $\overline{C}_\mathrm{B}^{\overline{\gamma}_\mathrm{B}\rightarrow{\infty}}$ as in the previous case, we substitute~\eqref{eq15} into $\mathcal{M}(n)$, we get 
\begin{align}
\label{apenE:5}
\mathcal{M}(n)=& \frac{1}{\overline{\gamma}_\mathrm{B}^n}
 \sum_{k=1}^{N_\mathrm{A}}(-1)^k\binom{N_\mathrm{A}}{k} \sum_{\rho\left (k,\nu_\mathrm{B}  \right ) }^{ } \frac{k!}{s_1!\cdots s_{\nu_\mathrm{B}}!} \nonumber \\ & \times \left[ \prod_{t=1}^{\nu_\mathrm{B}} \left( \frac{\left ( \tfrac{1 }{\Delta_2^\mathrm{B}} \right )^{\nu_\mathrm{B}-t} }{(\nu_\mathrm{B}-t)!} \sum_{z=\beta_\mathrm{B}+1-\mathcal{T}(j-1)}^{\beta_\mathrm{B}}B_{\beta_\mathrm{B}-z}^\mathrm{B}\right)^{s_t}\right]\nonumber \\ & \times \frac{1}{\Delta_2^\mathrm{B}}\underset{I_{13}}{\underbrace{ \int_0^{\infty}\exp\left ( -\frac{k \gamma_\mathrm{B}}{\Delta_2^\mathrm{B}} \right ) \gamma_\mathrm{B}^{-1+n+\sum_{t=1 }^{\nu_\mathrm{B}}(\nu_\mathrm{B}-t)s_t}}}\nonumber \\ & \times \underset{I_{13}}{\underbrace{
\left ( \Delta_2^\mathrm{B}\sum_{t=1 }^{\nu_\mathrm{B}}(\nu_\mathrm{B}-t)s_t -k \gamma_\mathrm{B} \right )d\gamma_\mathrm{B}}}.
\end{align}
Performing the integral term in~\eqref{apenE:5} and recalling the identity~\cite[Eq.~(3.351.3)]{Gradshteyn}, $I_{13}$ is obtained in closed-form expression. Next, by plugging~\eqref{apenE:5} in~\eqref{apenE:2}, then taking the derivative with respect to $n$, and setting $n=0$, $\overline{C}_\mathrm{B}^{\overline{\gamma}_\mathrm{B}\rightarrow{\infty}}$ is attained in closed-from formulation. Finally, by substituting the $\overline{C}_\mathrm{B}^{\overline{\gamma}_\mathrm{B}\rightarrow{\infty}}$ together with~\eqref{apenE:4}, and after some algebra, $\overline{C}_\mathrm{S}^{\infty}$ is expressed as in~\eqref{ascAsympV2}. This completes the proof.

%%%%%%%%%%%%%%%%%%%%%%%%%%%%%%%%%%%%%%%%%%%%%%%%%%%%%%%%%%%%%%%%%%%%%%%%%%%%%%%%
%%%%%%%%%%%%%%%%%%%%%%%%%%%%%%%%
%BIBLIOGRAPHY

\end{document}